\title[AAMAS-2025 Formatting Instructions]{Geometric Freeze-Tag Problem}
\author{\href{https://orcid.org/0000-0002-3626-8960}{Sharareh Alipour}}
\affiliation{
  \institution{Tehran Institute for Advanced Studies, Khatam university}
  \city{Tehran}
  \country{Iran}}
\email{sharareh.alipour@gmail.com}
\author{\href{https://orcid.org/0009-0008-4758-8995}{Kajal Baghestani}}
\affiliation{
  \institution{Sharif University of Technology}
  \city{Tehran}
  \country{Iran}}
\email{kajal.baghestani83@sharif.edu}
\author{\href{https://orcid.org/0009-0009-3691-3606}{Mahdis Mirzaei}}
\affiliation{
  \institution{University of Tehran}
  \city{Tehran}
  \country{Iran}}
\email{mahdis.mirzaei@ut.ac.ir}
\author{\href{https://orcid.org/0009-0009-4042-3132}{Soroush Sahraei}}
\affiliation{
  \institution{University of Tehran}
  \city{Tehran}
  \country{Iran}}
\email{soroush.sahraei@ut.ac.ir}
\begin{abstract}

We study the Freeze-Tag Problem (FTP), introduced by Arkin et al. (SODA’02), where the goal is to wake up a group of $n$ robots, starting from a single active robot. Our focus is on the geometric version of the problem, where robots are positioned in $\mathbb{R}^d$, and once activated, a robot can move at a constant speed to wake up others. The objective is to minimize the time it takes to activate the last robot, also known as the makespan.

We present new upper bounds for the $l_1$ and $l_2$ norms in $\mathbb{R}^2$ and $\mathbb{R}^3$. For $(\mathbb{R}^2, l_2)$, we achieve a makespan of at most $5.4162r$, improving on the previous bound of $7.07r$ by Bonichon et al. (DISC'24). In $(\mathbb{R}^3, l_1)$, we establish an upper bound of $13r$, which leads to a bound of $22.52r$ for $(\mathbb{R}^3, l_2)$. Here, $r$ denotes the maximum distance of a robot from the initially active robot under the given norm. To the best of our knowledge, these are the first known bounds for the makespan in $\mathbb{R}^3$ under these norms.

 We also explore the FTP in $(\mathbb{R}^3, l_2)$ for specific instances where robots are positioned on a boundary, providing further insights into practical scenarios.

\end{abstract}
\keywords{Freeze-Tag Problem; Makespan; Approximation Algorithms; Robot Routing; Scheduling; Binary Trees; Distributed Computing}
\newcommand{\BibTeX}{\rm B\kern-.05em{\sc i\kern-.025em b}\kern-.08em\TeX}
\begin{document}

%%% The following commands remove the headers in your paper. For final 
%%% papers, these will be inserted during the pagination process.

\pagestyle{fancy}
\fancyhead{}

%%% The next command prints the information defined in the preamble.

\maketitle 

%%%%%%%%%%%%%%%%%%%%%%%%%%%%%%%%%%%%%%%%%%%%%%%%%%%%%%%%%%%%%%%%%%%%%%%%

\section{Introduction}

The Freeze-Tag Problem (FTP), introduced by Arkin et al.~\cite{ArkinBFMS02}, involves one active robot and $n$ inactive robots in a metric space. Active robots can move at a constant speed, while inactive robots can only be activated when reached by an active one. The goal is to minimize the makespan, the total time needed to wake up all inactive robots.

FTP has several applications in robotics. Related algorithmic problems have been studied for controlling swarms of robots to perform tasks such as environment exploration \cite{albers1997exploring, albers2002exploring, bruckstein1997probabilistic, gage2001minimum, icking2000exploring, wagner1999distributed, wagner1998efficiently}, robot formation \cite{sugihara1996distributed, suzuki1999distributed}, and searching \cite{wagner1998efficiently}, as well as multi-robot formation in continuous and grid environments \cite{dumitrescu2002high, sugihara1996distributed, suzuki1999distributed}. FTP also has applications in network design, including broadcast and IP multicast problems \cite{arkin2003improved, arkin2006freeze, konemann2005approximating}.

FTP is NP-Hard in high-dimensional metrics like centroid metrics \cite{arkin2006freeze} (based on weighted star $n$-vertex graphs) and unweighted graph metrics with a robot per node \cite{arkin2003improved}. Subsequent research has extended this hardness result to constant-dimensional metric spaces, including Euclidean ones. A series of papers \cite{abel2017freeze, johnson2017easier, pedrosa2023freeze} proves that FTP is NP-Hard in $(\mathbb{R}^3, l_p)$ for all $p \geq 1$, meaning it is NP-Hard in 3D with any $l_p$-norm. In 2D, FTP is known to be NP-Hard for $(\mathbb{R}^2, l_2)$, though the complexity for other norms remains open \cite{abel2017freeze}. It is believed that FTP is also NP-Hard for $(\mathbb{R}^2, l_1)$ \cite{arkin2006freeze}.

In a geometric Freeze-Tag Problem (FTP) instance, the input consists of the positions of $n$ inactive (asleep) robots and one active robot in $\mathbb{R}^d$, along with the distance norm $l_p$. The output is the makespan, a real number representing the minimum time needed to wake up all robots. Each active robot moves at a constant speed.

In $\mathbb{R}^d$ with a norm $\eta$, the unit $\eta$-ball is the set of all points within a distance of one from the origin, where distance is measured by $\eta$, i.e., $\|v-u\|_{\eta}$ for points $u$ and $v$. For the $l_p$ norm, we write $\|u-v\|_p$, with $\|u-v\|_2$ representing the Euclidean distance, or the length of the line segment $\overline{uv}$.
In the case of robots in $(\mathbb{R}^d, l_p)$, we assume all robots are inside the $l_p$-unit ball in $\mathbb{R}^d$, with the distance measured using the $l_p$ norm and the initial active robot positioned at the origin. For simplicity, in $\mathbb{R}^2$, we refer to the region as the unit $\eta$-disk. For example, the unit $l_2$-disk is a regular disk, while the unit $l_1$-disk is a square rotated by 45 degrees. 
In $\mathbb{R}^3$, the unit $l_2$-ball is the region enclosed by a sphere, while the unit $l_1$-ball is shown in Figure \ref{unit-ball}.

\tdplotsetmaincoords{20}{-5}

\begin{figure}

    \centering
    \begin{tikzpicture}[tdplot_main_coords,scale=0.8]
		% Set up the viewing angle (elevation and azimuth in degrees)
		\tdplotsetmaincoords{18}{0}
		\tdplotsetrotatedcoords{0}{20}{0}		
		% 3D scope
		\begin{scope}[tdplot_rotated_coords,scale=1.7]
			
			% Define the vertices of the cross-polytope (octahedron)
			\coordinate (A) at ( 1, 0, 0);  % (+1, 0, 0)
			\coordinate (B) at (-1, 0, 0);  % (-1, 0, 0)
			\coordinate (C) at ( 0, 1, 0);  % (0, +1, 0)
			\coordinate (D) at ( 0,-1, 0);  % (0, -1, 0)
			\coordinate (E) at ( 0, 0, 1);  % (0, 0, +1)
			\coordinate (F) at ( 0, 0,-1);  % (0, 0, -1)
			
			% Point at the origin

			% Fill the triangular faces with different colors
			\filldraw[fill=red!60, draw=black, opacity=0.8]   (A) -- (C) -- (E) -- cycle;  % Face 1
			\filldraw[fill=blue!60, draw=black, opacity=0.8]  (A) -- (D) -- (E) -- cycle;  % Face 2
			\filldraw[fill=green!60, draw=black, opacity=0.8] (B) -- (C) -- (E) -- cycle;  % Face 3
			\filldraw[fill=yellow!60, draw=black, opacity=0.8] (B) -- (D) -- (E) -- cycle;  % Face 4
			
			\filldraw[blue] (0,0,0) circle (1pt);
			
			\filldraw[fill=orange!60, draw=black, opacity=0.8]  (A) -- (C) -- (F) -- cycle;  % Face 5
			\filldraw[fill=purple!60, draw=black, opacity=0.8]  (A) -- (D) -- (F) -- cycle;  % Face 6
			\filldraw[fill=cyan!60, draw=black, opacity=0.8]   (B) -- (C) -- (F) -- cycle;  % Face 7
			\filldraw[fill=magenta!60, draw=black, opacity=0.8] (B) -- (D) -- (F) -- cycle;  % Face 8
			
			% Draw the x, y, and z axes with thinner lines and adjusted lengths
			\draw[thick,->,gray] (-1.2,0,0) -- (1.2,0,0) node[anchor=north east,scale=0.8,shift={(0.2,0,0.55)}]{$x$};
			\draw[thick,->,gray] (0,-1.2,0) -- (0,1.2,0) node[anchor=north west,scale=0.8, shift={(-0.55,0,0.9)}]{$z$};
			\draw[thick,->,gray] (0,0,1.8) -- (0,0,-1.8) node[anchor=south east,shift={(0,-0.25, 0)}, scale=0.8]{$y$};
			
			% Optional: Add small ticks or adjust axis positions slightly if needed for balance

		\end{scope}
	\end{tikzpicture}

\caption{Representation of the unit $l_1$-ball in $\mathbb{R}^3$}
\Description{The unit ball in L1 and R3. Its shape looks like a prism, resembeling Ramiel from Neon Genesis Evangelion.}
 \label{unit-ball}

\end{figure}
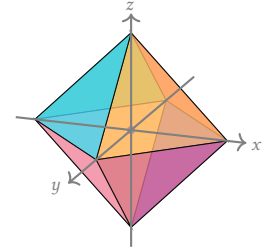

According to Arkin et al.~\cite{ArkinBFMS02}, FTP can be reformulated as finding a rooted spanning tree on a set of points that minimizes the weighted depth. The root (representing the awake robot) has one child, and all other nodes (representing the $n$ sleeping robots) can have up to two children (see Figure \ref{fig:combined}). Each edge represents the distance between two points in the metric space. This tree is called a wake-up tree, and its weighted depth is the wake-up time.

We define $\gamma_n(\mathbb{R}^d, \eta)$ as the worst-case optimal wake-up time of a wake-up tree for any set of $n$ sleeping robots located in the unit $\eta$-ball, rooted at the origin in $\mathbb{R}^d$. In other words, $\gamma_n(\mathbb{R}^d, \eta)$ represents the best possible upper bound for the makespan of $n$ sleeping robots, with the awake robot placed at the origin in the unit $\eta$-ball of $\mathbb{R}^d$.
The wake-up ratio w.r.t the $\eta$-norm is then defined as: $\gamma_{(\mathbb{R}^d,\eta)} = \max
_{n\in \mathbb{N}}\gamma_n(\mathbb{R}^d,\eta).$

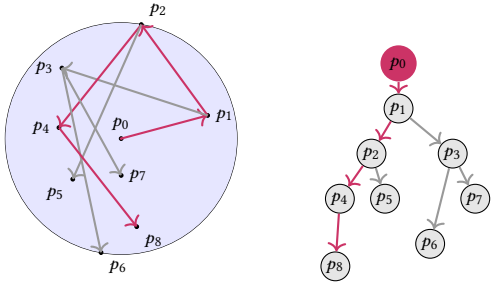
\begin{figure}
\centering
\begin{tikzpicture}[scale=0.35]

\newcommand{\labelsize}{\scriptsize}
% Circle figure (left side)
\draw (0,0) circle (4 + 0.4);
\fill[blue!10] (0,0) circle (4 + 0.4);

% Draw points and labels together with physical points and corresponding labels
\draw[fill=purple!80] (0,0) circle (2pt) node[above] {\labelsize $p_0$};
\draw[fill=black] (15:3 + 0.4) circle (2pt) node[right] {\labelsize $p_1$};
\draw[fill=black] (80:4 + 0.4) circle (2pt) node[above right] {\labelsize $p_2$};
\draw[fill=black] (130:3 + 0.5) circle (2pt) node[left] {\labelsize $p_3$};
\draw[fill=black] (170:2 + 0.4) circle (2pt) node[left] {\labelsize $p_4$};
\draw[fill=black] (220:2 + 0.4) circle (2pt) node[below left] {\labelsize $p_5$};
\draw[fill=black] (260:4 + 0.4) circle (2pt) node[below right] {\labelsize $p_6$};
\draw[fill=black] (270:1 + 0.4) circle (2pt) node[right] {\labelsize $p_7$};
\draw[fill=black] (280:3 + 0.4) circle (2pt) node[below right] {\labelsize $p_8$};

% Draw arrows between points in the circle
\draw[->, thick, purple!80] (0:0) -- (15:3 + 0.4); % p0 -> p1
\draw[->, thick, purple!80] (15:3 + 0.4) -- (80:4 + 0.4); % p1 -> p2
\draw[->, thick, gray!80] (15:3 + 0.4) -- (130:3 + 0.5); % p1 -> p3
\draw[->, thick, purple!80] (80:4 + 0.4) -- (170:2 + 0.4); % p2 -> p4
\draw[->, thick, gray!80] (80:4 + 0.4) -- (220:2 + 0.4); % p2 -> p5
\draw[->, thick, gray!80] (130:3 + 0.5) -- (260:4 + 0.4); % p3 -> p6
\draw[->, thick, gray!80] (130:3 + 0.5) -- (270:1 + 0.4); % p3 -> p7
\draw[->, thick, purple!80] (170:2 + 0.4) -- (280:3 + 0.4); % p4 -> p8

\end{tikzpicture}
\hspace{0.05\textwidth}
\begin{tikzpicture}[scale=0.6]
\newcommand{\labelsize}{\scriptsize}

% Tree figure (right side)
\node[fill=purple!80, circle, inner sep=2pt] (p0r) at (7.8,0) {\labelsize $p_0$};;
\node[circle, draw, fill=gray!20, inner sep=1pt] (p1r) at (7.8,-1) {\labelsize $p_1$};
\node[circle, draw, fill=gray!20, inner sep=1pt] (p2r) at (7.2,-2) {\labelsize $p_2$};
\node[circle, draw, fill=gray!20, inner sep=1pt] (p3r) at (9,-2) {\labelsize $p_3$};
\node[circle, draw, fill=gray!20, inner sep=1pt] (p4r) at (6.5,-3) {\labelsize $p_4$};
\node[circle, draw, fill=gray!20, inner sep=1pt] (p5r) at (7.5,-3) {\labelsize $p_5$};
\node[circle, draw, fill=gray!20, inner sep=1pt] (p6r) at (8.5,-4) {\labelsize $p_6$};
\node[circle, draw, fill=gray!20, inner sep=1pt] (p7r) at (9.5,-3) {\labelsize $p_7$};
\node[circle, draw, fill=gray!20, inner sep=1pt] (p8r) at (6.4,-4.5) {\labelsize $p_8$};

% Tree connections
\draw[-> ,thick, purple!80] (p0r) -- (p1r);
\draw[->,thick, purple!80] (p1r) -- (p2r);
\draw[->,thick, gray!80] (p1r) -- (p3r);
\draw[->,thick, purple!80] (p2r) -- (p4r);
\draw[->,thick, gray!80] (p2r) -- (p5r);
\draw[->,thick, gray!80] (p3r) -- (p6r);
\draw[->,thick, gray!80] (p3r) -- (p7r);
\draw[->,thick, purple!80] (p4r) -- (p8r);

\end{tikzpicture}

\caption{An FTP instance and its wake-up tree. The left diagram shows the positions and movements of the robots inside the $l_2$-disk, while the right diagram displays the corresponding wake-up tree. Red arrows indicate the path from the root to a leaf.}
\label{fig:combined}
\Description{An instance of FTP alongside it's respective wake-up tree. On the right there is a 2D instance of FTP, containing the points in a unit disk, on the left we have a binary tree representing the wake-up tree for this scenario.}
\end{figure}

Despite extensive research on the complexity of FTP, few results address the wake-up ratio. For $(\mathbb{R}^2, l_1)$, Bonichon et al. proved that the wake-up ratio is at most $5$ and provided an example with a makespan of $5$ \cite{abs-2402-03258}. This gives an upper bound of $5\sqrt{2}$ for the wake-up ratio in $(\mathbb{R}^2, l_2)$, improving the previous upper bound of $10.06$ found in \cite{yazdi20151}.

For any set of points, if the awake robot is at most distance $r$ from all sleeping robots and we have an upper bound $c$ for the makespan, scaling the unit ball with these positions allows constructing a wake-up tree with a makespan of at most $r \times c$. This gives a $c$-approximation factor algorithm for FTP, since $r$ is a trivial lower bound on the makespan.

%%%%%%%%%%%%%%%%%%%%%%%%%%%%%%%%%%%%%%%%%%%%%%%%%%%%%%%%%%%%%%%%%%%%%%%%

\section{Our Results}

In this paper, we begin by focusing on FTP within the unit $l_2$-disk and introduce two new wake-up strategies. Our main result is an improved makespan, reducing the previous bound from $7.07$ to $5.4162$, as established in \cite{abs-2402-03258}.
Formally, we state the following theorem:

\begin{theorem} A robot at the origin can wake up any set of $n$ asleep robots in the unit $l_2$-disk with a makespan of at most $5.4162$. 
\label{mix}
\end{theorem}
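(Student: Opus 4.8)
The goal is to build, for every placement of the $n$ sleeping robots in the unit $l_2$-disk, a wake-up tree whose weighted depth is at most $5.4162$. I expect the constant to appear as the optimum of a small optimization that balances two recursive wake-up strategies (hence the theorem's name ``mix''), and it is what lets us beat the $5\sqrt2\approx 7.07$ bound inherited by scaling the $l_1$ result of Bonichon et al.

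First I would fix the reformulation: the active robot sits at the origin $o$, all sleeping robots lie in the closed unit disk, and a wake-up tree is a rooted tree in which $o$ has one child and every other node has at most two children, its weighted depth being the makespan. The workhorse is a \emph{cone lemma}: a robot located at $q$ that must wake a set $S$ lying within distance $R$ of it and contained in an angular sector of width $2\alpha$ (as seen from $o$) can do so in time $c(\alpha)\,R$, where $c(\alpha)$ is read off from a recursion that processes $S$ either by bisecting the sector (half-angle $\alpha\mapsto\alpha/2$) or by peeling off the radially extreme robot; inside a thin sector the law of cosines $\|u-v\|_2^2=\|u\|_2^2+\|v\|_2^2-2\|u\|_2\|v\|_2\cos\theta$ with $\theta\le 2\alpha$ keeps every edge short, so the recursion sums to a geometric-type series with a closed form for $c(\alpha)$.

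Strategy 1 is efficient when the robots surround $o$: partition the disk into $k$ equal sectors, let the origin robot visit the sectors in turn, and the instant it activates the first robot of a sector hand that whole sector to the new robot via the cone lemma, so the $k$ sectors finish in parallel after an ``ignition'' phase whose cost is bounded by roughly $k$ times the radius plus the per-sector cone cost; this yields a guarantee $B_1$ that grows with $k$ (equivalently with $\alpha=\pi/k$). Strategy 2 is efficient when some angular wedge of width at least $\beta$ around $o$ is empty, i.e.\ the robots are clustered: the origin robot first steps toward the cluster, after which all robots sit in a half-disk or smaller as seen from its new position, effectively shrinking the radius, and one then recurses (or re-applies Strategy 1) on the contracted instance; this yields a guarantee $B_2$ that decreases as $\beta$ grows.

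The instances split cleanly: either every wedge of width $\beta$ contains a robot, so the robots are ``spread'' and Strategy 1 applies, or some width-$\beta$ wedge is empty and Strategy 2 applies; tuning $\beta$ (and the small integer $k$ together with $\alpha$) to minimize $\max(B_1,B_2)$ gives $5.4162$. The delicate part, where essentially all the work lies, is the cone lemma together with the ignition-phase accounting of Strategy 1: making the recursion for $c(\alpha)$ tight, charging correctly for the distance the origin robot covers while hopping between sectors (and for the fact that a sector seen from $o$ may have diameter close to $2$, not $1$), and arranging that the two guarantees actually meet at a constant comfortably below $7.07$ rather than at something larger.
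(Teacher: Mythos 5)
Your high-level plan (two complementary strategies balanced at a threshold, with a recursive sector-subdivision lemma doing the heavy lifting) matches the shape of the paper's argument, but both of your concrete choices have genuine gaps. First, the dichotomy is on the wrong parameter. You split instances by whether some angular wedge of width $\beta$ around the origin is empty. An empty wedge of width $\beta<\pi$ buys essentially nothing: after one step toward the ``cluster,'' the remaining robots can still lie at distance up to $2$ in almost every direction, so the instance does not contract; and if you insist on $\beta\ge\pi$, the complementary case (every half-disk occupied) carries no usable structure for your Strategy~1. The paper splits radially instead: after the first robot $p_1$ (at distance $r_1$) is woken and the disk is halved through $\overline{Op_1}$, it looks at the distance $r_2$ of the second-closest robot to the \emph{center} in each half. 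If $r_2\le 0.3627$, the Arc-Strategy bound $3.9651+2r_1+2r_2\le 3.9651+4r_2$ applies; if $r_2>0.3627$, the central disk of radius $r_2$ is empty of sleepers, the active robot walks out to radius $r_2$, and a second, genuinely different Ring-Strategy (projecting onto concentric circles and halving the ring thickness at each activation) gives $\pi+3-2r_2$. These two bounds are monotone in opposite directions in $r_2$ and cross at $5.4162$; that is where the constant comes from, not from optimizing a sector count $k$ against a wedge width $\beta$.

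Second, your Strategy~1 ignites the $k$ sectors \emph{sequentially}: the origin robot tours the sectors and hands each off as it arrives, so the last sector starts only after the robot has traveled $\Theta(k)$ (each inter-sector hop can cost up to $2$), which already exceeds the budget for small $k$. The paper avoids this with a parallel doubling: the first woken robot immediately takes half the disk, then each active robot splits its region with the next robot it wakes, so the angular width on any root-to-leaf path satisfies $\alpha_i\le\pi/2^{i-2}$ and the angular contributions form the convergent sum $2\sum_i\sin(\alpha_i/2)$, while the radial increments telescope to at most $1-r_3$. That per-edge decomposition into a telescoping radial part and a geometrically decaying angular part is the actual content of your ``cone lemma,'' and it is exactly the piece your proposal defers; without it, and without the radial case split and the Ring-Strategy as the second arm, the route you describe does not reach $5.4162$.
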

The proof of this theorem is provided in Section \ref{proofl2r2}. Also, concurrently with our work, \cite{bonichon:hal-04803161} reported a ratio of $4.63r$ for $(\mathbb{R}^2, l_2)$. Our first strategy is similar to their strategy; however, our second strategy is completely different.
Next, we examine FTP in $\mathbb{R}^3$ and propose a new strategy for $(\mathbb{R}^3, l_1)$. We establish an upper bound of $13$ for $(\mathbb{R}^3, l_1)$, which leads to an upper bound of $13\sqrt{3}$ for $(\mathbb{R}^3, l_2)$. To our knowledge, no previous bounds for the makespan in $\mathbb{R}^3$ have been provided. Formally, we state the following theorem:

\begin{theorem} A robot at the origin can wake up any set of $n$ asleep robots in the unit $l_1$-ball in $\mathbb{R}^3$ with a makespan of at most $13$.
\end{theorem}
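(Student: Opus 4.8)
The plan is to exploit the fact that the unit $l_1$-ball $B$ in $\mathbb{R}^3$ is the octahedron $\{(x,y,z):|x|+|y|+|z|\le 1\}$, which decomposes into the eight closed octant simplices $S_\sigma$, $\sigma\in\{+,-\}^3$; for instance $S_{+++}=\{(x,y,z):x,y,z\ge 0,\ x+y+z\le 1\}=\mathrm{conv}\{0,e_1,e_2,e_3\}$ (with $e_1,e_2,e_3$ the standard basis vectors), and the other seven are reflected copies. Each $S_\sigma$ is a tetrahedron having the origin as one of its vertices, and — crucially — every point of $S_\sigma$ is within $l_1$-distance $1$ of the origin, so the origin is a common ``good root'' for all eight pieces at once, even though the $l_1$-diameter of a single piece is $2$. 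The overall strategy has two ingredients: a \emph{dispatch} routine that, starting from the single awake robot at the origin, places one awake robot at a suitable low-radius vertex of each non-empty octant simplex; and a recursive \emph{local} routine that wakes up every robot inside one octant simplex.

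For the local routine I would prove a scaling lemma of the form: an awake robot at a designated vertex $v$ of a (possibly scaled and reflected) octant simplex whose $l_1$-radius with respect to $v$ (the maximum $l_1$-distance from $v$ to a point of the simplex) is $\rho$ can wake every robot inside it with makespan at most $C\rho$, for an explicit constant $C$. The proof is by the standard red (edge-midpoint) refinement of a tetrahedron into eight sub-tetrahedra, together with the doubling trick used throughout FTP: the awake robot at $v$ wakes a robot and sends it to a second sub-piece, the two of them wake two more and send them to two further sub-pieces, and so on, until each non-empty sub-tetrahedron carries an awake robot at a vertex of suitably small relative radius; then one recurses inside each sub-tetrahedron. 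Each refinement level contributes a term that is a fixed multiple of the current radius, and since the relevant radii shrink geometrically, the contributions form a convergent geometric series summing to $C\rho$. The dispatch routine is itself a short doubling process in $B$: three rounds suffice to pass from one awake robot to eight, one per octant, and its cost is bounded in terms of the $l_1$-radius of $B$, which is $1$. Combining the dispatch cost with $C\cdot 1$ and tracking the constants carefully is what yields the bound $13$.

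I expect the main obstacle to be the quantitative accounting rather than the qualitative scheme. Three points need care. First, the red refinement of a tetrahedron does not exactly halve the $l_1$-diameter, and the four ``interior'' sub-tetrahedra are not similar to the parent, so one must track several region shapes and the precise factor by which the relevant radius decreases, so that the geometric series sums to exactly $13$ and not more — this includes choosing, for each sub-tetrahedron, the vertex that minimizes its relative radius (a ``sharp'' vertex such as $e_1$ is a poor choice, whereas an edge-midpoint vertex is good). Second, the input only constrains the robots' positions to lie in $B$, not their intermediate locations, so the couriers in the dispatch and doubling steps must be routed so as to provably remain in the relevant region while still meeting the movement bounds. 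Third, the argument must be robust to worst-case inputs: octants or sub-tetrahedra may be empty, in which case the corresponding courier should be redirected to assist a sibling sub-region, and robots may be clustered arbitrarily, so at no step may one assume a robot is conveniently located — only the existence of some robot in a region of controlled diameter may be used. Optimizing the designated vertices and the order in which sub-regions are served so that the final constant comes out to $13$ is the crux of the proof.
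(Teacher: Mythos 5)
Your overall architecture (spawn several awake couriers, send one into each piece of a decomposition, recurse on geometrically shrinking regions) matches the paper's, but your decomposition is different --- octant tetrahedra with red refinement versus the paper's cover of the cross-polytope by six smaller cross-polytopes of diameter $\frac{2d}{3}$ rooted at their centers (Lemma~\ref{sixcoverlemma}) --- and, more importantly, your proposal is missing the one quantitative idea that makes the constant $13$ attainable. In your scheme, producing eight awake couriers at each level requires three doubling rounds, and each doubling round in a region of diameter $D$ costs up to $D$, because, as you yourself insist, one may only use ``the existence of some robot in a region of controlled diameter.'' So spawning the couriers costs roughly $3D$ per level; with red refinement the diameter only halves per level, so the doublings alone contribute a series of about $3D(1+\frac{1}{2}+\frac{1}{4}+\cdots)=6D$, to which courier travel and the top-level dispatch (already about $1+2+2=5$ from the origin before any travel to the octant roots) must be added. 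The natural estimate lands above $20$, and no choice of designated vertices or service order repairs it: the accounting is not merely delicate, it fails.

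The paper closes exactly this gap with a density argument. If the current region holds at least $128$ sleeping robots, partitioning it into $14$ pieces of diameter $\frac{d}{2}$ (six cross-polytopes plus eight pyramids) guarantees by pigeonhole a piece containing at least $10$ robots, so all three doubling rounds can be performed inside that single piece at cost $\frac{d}{2}$ each, yielding eight awake robots in only $\frac{3d}{2}$ (Lemma~\ref{sevenrobots}); if the region holds at most $127$ robots, plain doubling finishes outright in $\frac{13d}{2}$ (Lemma~\ref{less128}), which doubles as the base case of the recursion. With this cheap spawning step the recursion reads $f(d)\leq \frac{3d}{2}+\frac{2d}{3}+f(\frac{2d}{3})$ and sums to $\frac{13d}{2}=13r$. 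You would need to import this (or an equivalent) ``many robots implies a dense sub-piece implies cheap doubling'' lemma, together with an explicit small-$n$ base case, for your tetrahedral refinement to have any chance of reaching $13$; as written, the claim that your geometric series ``sums to exactly $13$'' is unsubstantiated.
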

 The proof of this theorem is presented in Section \ref{ftpr3l1section}.
Our approaches for $\mathbb{R}^2$ and $\mathbb{R}^3$ are fundamentally different. In $\mathbb{R}^2$, we start at the origin and initially awaken robots within a disk of radius zero, since only the origin robot is awake. In each step, we expand the radius of this disk, waking up the robots that are closer to the origin, until the radius reaches $1$. By the end, all robots inside the disk are awake. However, this method cannot be extended to $\mathbb{R}^3$. Therefore, we employ a different strategy for $\mathbb{R}^3$. 

For $\mathbb{R}^3$, when the number of robots is small, we solve the problem directly in a proper time. For larger numbers of robots, we divide the unit ball into smaller partitions and apply a recursive approach to solve the problem within each partition. This strategy allows us to handle a larger number of robots effectively.

% Next, we show that for certain values of $n$, the wake-up ratio for $n$ points is not achieved when the points are equally distributed on the unit circle. We do this by presenting specific instances and computing their makespan. 
Finally, we examine a version of FTP in $(\mathbb{R}^3,l_2)$, where the asleep robots are located on the boundary of the unit $l_2$-ball. Using our approach for $(\mathbb{R}^2, l_2)$, we show that FTP in $(\mathbb{R}^3, l_2)$ for asleep robots on the boundary can be solved with a makespan of at most $12.37$. Formally, we have:
\begin{theorem}
    A robot at the origin can wake up any set of $n$ asleep robots on the boundary of the unit $l_2$-ball in $\mathbb{R}^3$ with a makespan of at most $12.37$. 
\end{theorem}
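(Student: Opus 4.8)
The plan is to lift the planar bound of Theorem~\ref{mix} to the unit sphere $S^2$ (the boundary of the unit $l_2$-ball in $\mathbb{R}^3$) by covering $S^2$ with two antipodal spherical caps, routing one awake robot to the apex of each cap, and then, inside each cap, reducing to a planar instance through a distance-non-decreasing projection onto a flat disk.

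The core ingredient is the following lemma: \emph{if $C\subseteq S^2$ is a spherical cap with apex $c$ and geodesic radius $\theta<\pi$, then a single robot at $c$ can wake every robot lying on $C$ with makespan at most $5.4162\,\theta$.} To prove it, use the azimuthal--equidistant map $f\colon C\to D_\theta$ onto the flat disk $D_\theta$ of radius $\theta$ that sends the point at geodesic distance $\rho$ and azimuth $\alpha$ from $c$ to the planar point with polar coordinates $(\rho,\alpha)$; in particular $f(c)=0$. In geodesic polar coordinates the metric that $C$ inherits from $\mathbb{R}^3$ is $d\rho^2+\sin^2\!\rho\,d\alpha^2$, while $D_\theta$ carries the Euclidean metric $d\rho^2+\rho^2\,d\alpha^2$; since $\sin\rho\le\rho$ on $[0,\theta]$, the map $f^{-1}\colon D_\theta\to C$ does not increase the length of any curve, hence is $1$-Lipschitz for the intrinsic metrics. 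As $D_\theta$ is convex and the intrinsic metric of $C$ dominates the ambient chord metric, this gives $\|x-y\|_2\le\|f(x)-f(y)\|_2$ for all $x,y\in C$. Now scale Theorem~\ref{mix} to $D_\theta$: a robot at $0=f(c)$ has a wake-up tree for the point set $f(C)$ of makespan at most $5.4162\,\theta$; transplanting that tree to $C$ with the same combinatorial structure only shortens every edge, hence every root-to-leaf path, which proves the lemma.

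With the lemma at hand, the strategy is: the robot at the origin walks to an arbitrary asleep robot $a$; since every robot lies on $S^2$, this step costs exactly $1$ and leaves two awake robots at $N:=a$. Take $C_1$ to be the closed hemisphere with apex $N$ and $C_2$ the closed hemisphere with apex $S:=-N$, so that $C_1\cup C_2=S^2$ (robots on the equator are assigned to either). One of the two robots at $N$ runs the $C_1$-tree starting at time $1$ and finishes by $1+5.4162\cdot(\pi/2)$; the other walks the diameter from $N$ to $S$ in time $\|N-S\|_2=2$ and then runs the $C_2$-tree, finishing by $3+5.4162\cdot(\pi/2)\approx 11.5$, which is within the claimed $12.37$. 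Letting the first cap grow slightly past a hemisphere while shrinking the second to keep $C_1\cup C_2=S^2$ balances the two branches and pushes the bound down to about $10.5$; the looser constant $12.37$ in the statement leaves ample room for a cruder covering or a more conservative dispatch (e.g.\ requiring each cap apex to be at an actual robot).

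The step I expect to be the real obstacle is the lemma --- producing a map from a curved cap to a flat disk of radius only $\theta$ that is provably nonshrinking in Euclidean distance. The Riemannian comparison $\sin^2\!\rho\le\rho^2$ above is the clean way to get it, and it uses only the intrinsic metric of the cap together with the elementary inequality $\sin\rho\le\rho$, so it stays valid even for caps larger than a hemisphere (where an ambient segment between two cap points may exit the cap). Everything else --- optimizing the two cap radii against the dispatch cost, and the treatment of boundary robots --- is routine bookkeeping.
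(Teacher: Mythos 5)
Your proposal is correct, and it shares the paper's core framework---cover the sphere by two hemispheres, project each onto a flat disk of radius $\frac{\pi}{2}$ via the azimuthal--equidistant map, and invoke the planar bound---but it diverges in two substantive ways. First, your justification of the non-shrinking property of the projection is genuinely different: the paper proves $\|p_1-p_2\|_2\le\|p'_1-p'_2\|_2$ (Lemma~\ref{lemmamapping}) by an explicit trigonometric case analysis (equal $\theta$, equal $\delta$, then an isosceles-trapezoid construction for the general case), whereas you derive it from the pointwise metric comparison $\sin^2\rho\le\rho^2$, which makes $f^{-1}$ length-non-increasing on curves and, combined with convexity of the disk and the chord-versus-arc inequality, gives the same conclusion. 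Your argument is shorter, avoids the case analysis, and applies verbatim to caps of any geodesic radius. Second, the dispatch differs: the paper picks the hemisphere containing $n_1\ge n/2$ robots, wakes it in $1+\frac{\pi}{2}\times 5.9651$ time units (using only the Arc-Strategy constant of Lemma~\ref{arcratio} rather than Theorem~\ref{mix}), and then lets the $n_1+1$ awake robots each fetch one remaining robot across the ball's diameter in $2$ further units, for a total of about $12.37$; you instead send one of the two robots co-located at $N$ to the antipode (cost $2$) and run the scaled Theorem~\ref{mix} tree independently in each hemisphere, giving $3+5.4162\cdot\frac{\pi}{2}\approx 11.51$. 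Your route needs no counting argument about which hemisphere is denser and in fact yields a better constant than the one claimed; it does require rooting the second tree at the antipodal apex $-N$, where no robot need be located, but that is harmless since Theorem~\ref{mix} never assumes the root coincides with an asleep robot.
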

The proof of this theorem is deferred to Section \ref{asleepboundryl2strat}.
Our approach involves mapping the points on the boundary of the unit $l_2$-ball in $\mathbb{R}^3$ to a $l_2$-disk in $\mathbb{R}^2$. By solving the problem in $\mathbb{R}^2$, we obtain a wake-up strategy for $\mathbb{R}^3$. One motivation for tackling the problem in this setting is our conjecture that the maximum makespan for $n$ points is reached when the points lie on the boundary of the unit $\eta$-ball.

A similar wake-up strategy allows us to tackle another variant of FTP, which we call surface-FTP. In this version, the points are positioned on the boundary of a unit $l_2$-ball in $\mathbb{R}^3$, with the active robot also on the surface. The distance between two points $v$ and $u$ is defined as the geodesic distance, or the length of the shortest arc connecting them on the surface.
This problem has practical applications in areas like communication and transportation on Earth's surface, where geodesic distance between points is important. Formally, we have:
\begin{theorem}
 Given an instance of surface-FTP, the makespan is at most $11.65$.
\end{theorem}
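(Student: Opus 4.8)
The plan is to reduce surface-FTP to the variant of FTP in $(\mathbb{R}^3, l_2)$ with all asleep robots on the boundary of the unit $l_2$-ball, which the previous theorem handles with makespan at most $12.37$. The key observation is that for two points $u, v$ on the unit sphere, the geodesic (great-circle arc) distance and the Euclidean chord distance are related by a fixed, monotone formula: if $\theta = \angle(u,v)$ is the central angle, then $\|u-v\|_2 = 2\sin(\theta/2)$ while the geodesic distance is $\theta$. Hence $\|u-v\|_2 \le \text{geo}(u,v) \le \frac{\pi}{2}\,\|u-v\|_2$ on a hemisphere, and more usefully, on the whole sphere $\text{geo}(u,v) \le \frac{\pi}{2}\|u-v\|_2$ since the ratio $\theta / (2\sin(\theta/2))$ is increasing on $[0,\pi]$ and equals $\pi/2$ at $\theta = \pi$. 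So the two metrics are bi-Lipschitz-equivalent with distortion $\pi/2$.

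The steps I would carry out, in order. First, place the initially active robot — which in surface-FTP lies on the sphere — and note that every asleep robot is also on the sphere, so the ``radius'' $r$ in the boundary variant is $1$ when we measure chord distance. Second, run the wake-up strategy guaranteed by the preceding theorem (boundary robots in $(\mathbb{R}^3,l_2)$), but interpret every robot's motion as travelling along the \emph{geodesic} arc between its current position and its target, rather than along the Euclidean chord. Each individual edge of the wake-up tree of Euclidean length $\ell$ now costs geodesic length at most $\frac{\pi}{2}\ell$ — except we must be slightly careful, because the strategy's intermediate waypoints (e.g.\ the center/origin) may not lie on the sphere. Third, therefore, I would either (a) argue the boundary-variant strategy from the previous theorem only ever routes robots between points on the sphere (so all motion stays on the surface and the $\pi/2$ bound applies edge-by-edge), or (b) project any off-sphere waypoint radially onto the sphere and absorb the resulting constant-factor overhead. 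Either way, the makespan is multiplied by at most $\pi/2$, giving $\frac{\pi}{2}\times 12.37 \approx 19.43$, which is \emph{worse} than the claimed $11.65$ — so a naive reduction does not suffice.

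To get $11.65$ I expect one must instead re-run the actual construction used for the boundary variant (the mapping of the sphere to a planar $l_2$-disk composed with the $(\mathbb{R}^2,l_2)$ strategy of Theorem~\ref{mix}), but now track geodesic lengths throughout rather than Euclidean ones, exploiting that the sphere-to-disk map and the planar routing interact more favorably with arc length than the crude $\pi/2$ bound suggests. Concretely: the map sending a boundary point to its planar image should be chosen (e.g.\ an azimuthal/arc-length-type projection) so that the planar distance used by the $\mathbb{R}^2$ algorithm \emph{over-estimates} the true geodesic distance on the sphere by a controlled factor smaller than $\pi/2$; then the $5.4162$ bound of Theorem~\ref{mix}, applied in the planar picture and pulled back, yields $c \cdot 5.4162$ for some $c < 2.15$, and indeed $11.65/5.4162 \approx 2.151$. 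The main obstacle is precisely this: pinning down the right projection and proving the distortion inequality $\text{geo}_{\text{sphere}}(u,v) \le c\cdot d_{\text{plane}}(\phi(u),\phi(v))$ with $c$ small enough, uniformly over all pairs, while simultaneously ensuring the active robot's starting position maps correctly and the recursion/partition bookkeeping inherited from the planar strategy still closes. Handling the worst-case pairs — antipodal or near-antipodal points on the sphere, where geodesic distance is largest relative to any planar embedding — is where the constant is determined and where the argument is most delicate.
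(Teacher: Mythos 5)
Your second paragraph lands on exactly the construction the paper uses: the map $\mathcal{M}$ is the azimuthal equidistant projection sending a hemisphere point with colatitude $\delta$ and azimuth $\theta$ to the planar point with polar coordinates $(\delta,\theta)$ in a disk of radius $\frac{\pi}{2}$, and the whole argument rests on the claim that the geodesic distance between two surface points is at most the Euclidean distance between their planar images. Your first paragraph's naive reduction (multiply the $12.37$ bound by the chord-to-geodesic distortion $\frac{\pi}{2}$) is correctly discarded --- not only does it give $19.43$, it also fails structurally because the boundary-variant strategy starts the active robot at the center of the ball, off the surface. So the diagnosis is right, but the execution is missing: you never fix the projection, never prove the distortion inequality, and the final constant is not of the form $c\cdot 5.4162$ with $c\approx 2.151$ as you guess. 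The paper's accounting is: split the sphere into two hemispheres; in each hemisphere spend at most $\pi$ time units of geodesic travel getting an awake robot to the pole (the image disk's center), then apply Theorem~\ref{mix} on the disk of radius $\frac{\pi}{2}$ at cost $\frac{\pi}{2}\times 5.4162$; the total is $\pi+\frac{\pi}{2}\times 5.4162\approx 11.65$. The $\frac{\pi}{2}$ thus enters as a radius scaling plus an additive travel term, not as a multiplicative distortion of $5.4162$.

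The step you single out as most delicate --- proving $\mathrm{geo}(p_1,p_2)\le\|\mathcal{M}(p_1)-\mathcal{M}(p_2)\|_2$ uniformly over all pairs --- is indeed the crux, and you leave it unproved. Be aware that the paper does not prove it either: Lemma~\ref{lemmamapping} only establishes the weaker chord-distance version $\|p_1-p_2\|_2\le\|p_1'-p_2'\|_2$, and the geodesic version needed for surface-FTP is supported only by a numerical sweep over a grid of values of $(\delta_1,\delta_2,\theta_1-\theta_2)$. So your proposal correctly isolates the one inequality on which the theorem hinges (a gap shared with the paper itself), but it additionally lacks the concrete definition of the projection, the hemisphere-splitting step, and the travel-time bookkeeping that actually produce the number $11.65$.
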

This theorem is proven in Section \ref{surfaceftpstrat}.

%%%%%%%%%%%%%%%%%%%%%%%%%%%%%%%%%%%%%%%%%%%%%%%%%%%%%%%%%%%%%%%%%%%%%%%%

%%\section{FTP in $(\mathbb{R}^2,l_2)$}
\section{FTP in \texorpdfstring{$(\mathbb{R}^2,l_2)$}{R2, l2}}

\subsection{The wake-up ratio is at most \texorpdfstring{$5.4162$}{5.4162} in  \texorpdfstring{$(\mathbb{R}^2, l_2)$}{R2, l2}}
\label{proofl2r2}
This section establishes an upper bound of $5.4162$ for $\gamma_{({\mathbb R^2}, l_2)}$. We present two key strategies for computing a wake-up tree and combine them to achieve an improved makespan.

\subsubsection{Arc-Strategy}
We introduce the Arc-Strategy as our initial approach. For an instance of FTP in $(\mathbb{R}^2, l_2)$, the Arc-Strategy is described as follows.
\begin{itemize}
    \item The awake robot, $p_0$, starts at the center of a unit $l_2$-disk and moves to the position of the nearest asleep robot, $p_1$, located at point $A$. At this stage, both $p_0$ and $p_1$ are active.
 \item Next, divide the disk into two halves by drawing a line through the center and point $A$. Each of the two active robots is responsible for waking up the asleep robots in one half of the disk. Their strategy is as follows: in a parallel step, $p_0$ activates the nearest asleep robot to the center in its half, denoted as $p_2$, while $p_1$ activates the nearest asleep robot to the center in its half, denoted as $p_3$.
\item Now, $p_0$ and $p_2$ divide their half of the disk into two quarters. $p_0$ handles its quarter by activating the nearest robot to the center in its region, while $p_2$ does the same in its own quarter. Similarly, $p_1$ and $p_3$ each take responsibility for their respective quarters, activating the nearest robots to the center within their areas.
\item In each step, when a robot $p_i$ activates another robot $p_j$, where $p_j$ is the nearest robot to the center within $p_i$'s assigned portion, $p_i$ and $p_j$ divide their region into two halves. Each robot then takes responsibility for its own half. If $p_i$ is the only robot remaining in its portion, it stops and does nothing in subsequent rounds.
\end{itemize} 
When the Arc-Strategy terminates, all the robots are awake. The following lemma provides the wake-up time for the Arc-Strategy. 
\begin{lemma}
    The Arc-Strategy provides an upper bound of $7.9651$ for the wake-up ratio in $(\mathbb{R}^2, l_2)$.
    \label{arcratio}
\end{lemma}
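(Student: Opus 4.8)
The plan is to track, for each robot in the wake-up tree produced by the Arc-Strategy, the total distance it travels along its root-to-leaf path and to bound this uniformly over all leaves. First I would set up notation: after the root $p_0$ reaches $p_1$ at point $A$ (a step of length at most $1$, the disk radius), we have two active robots each owning a half-disk; recursively, after $k$ further parallel rounds a robot owns a sector of angle $2\pi/2^{k}$ (more precisely, a ``pie slice'' of the unit disk whose apex is the origin), and within its sector it moves to the asleep robot nearest the origin. The key geometric observation to isolate is: if a robot sits at a point $q$ inside a sector of angular width $\theta$ with apex at the origin and $q$ is the closest asleep point to the origin in that sector, then the next robot it activates — the closest asleep point to the origin in one of the two half-sectors of width $\theta/2$ — is at distance at most $f(\theta)\cdot(1-\|q\|)$ for an appropriate function, or more usefully, I would bound the \emph{increment} in path length at the round where the sector has width $\theta$ by something like $2\sin(\theta/2)$ times the remaining radial room, since two points in a sector of angular width $\theta$ both at radius $\le 1$ are within distance $2\sin(\theta/2)$ of each other when at the same radius, and the radial coordinate is non-decreasing along the construction because we always jump to the point nearest the origin.

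Next I would convert this into a sum. Along a fixed root-to-leaf path, the edges correspond to rounds $k = 0, 1, 2, \dots$ where at round $k\ge 1$ the operative sector angle is $\theta_k = 2\pi/2^{k}$ (with the first edge $p_0p_1$ handled separately, contributing at most $1$). The worst case is when radii stay as small as possible for as long as possible and then the adversary places points to force long jumps; the standard trick is to let $r_k$ denote the radius of the robot activated at round $k$, note $0 = r_0 \le r_1 \le \dots \le 1$, and bound edge $k$ by a quantity of the form $g(\theta_k)\,(1 - r_{k-1})$ or, summing a geometric-type series, directly by $\sum_{k\ge 1} c\sin(\pi/2^{k})$ plus lower-order terms. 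Because $\sin(\pi/2^k)$ decays geometrically, this sum converges, and I would compute the explicit constant, adding the initial $+1$ for the $p_0 \to p_1$ move and being careful about the very first few terms where $\sin(\theta/2)$ is large (e.g.\ $\theta_1 = \pi$ gives a diameter-length jump of $2$, $\theta_2 = \pi/2$ gives $\sqrt{2}$, etc.). The target $7.9651$ should emerge as (initial move) $+ \sum_{k\ge 1}(\text{edge bound at level }k)$ after optimizing the split between the "radial" and "angular" parts of each jump.

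The main obstacle I anticipate is making the per-edge bound tight enough and simultaneously valid: a naive bound ``two points in a width-$\theta$ sector of the unit disk are within $2\sin(\theta/2)$'' ignores that we can also gain from the monotonic growth of the radial coordinate, and using it bluntly at every level would give a constant larger than $7.9651$; conversely, exploiting the radial growth requires an amortization argument (a potential function in the radius) so that the ``saved'' radius at one level is not double-counted at the next. I would therefore phrase the induction as: for a robot at radius $\rho$ owning a sector of angular width $\theta$, the remaining makespan to wake everyone in that sector is at most $h(\theta)\cdot(1-\rho) + (\text{const depending on }\theta)$, prove the recursion $h(\theta) \le \max\big(\text{one child analysis}\big)$ relating $h(\theta)$ to $h(\theta/2)$ via the triangle inequality on the jump, solve the recursion, and evaluate at $\theta = 2\pi$, $\rho = 0$. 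Verifying the recursion's base/tail behavior and confirming the numeric constant rounds up to $7.9651$ is the delicate part; everything else is bookkeeping with geometric series.
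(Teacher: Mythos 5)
Your plan is essentially the paper's own argument: along a root-to-leaf path the paper bounds each edge by a radial increment plus a chord, $d_i \leq a_i + 2r_{i-1}\sin(\alpha_{i-1}/2)$, lets the radial increments telescope to at most $1$, bounds the chordal parts by the geometrically decaying series $2\sum_i \sin(\alpha_i/2)$ with $\alpha_i \leq \pi/2^{i-2}$, and handles the first three edges by the triangle inequality to get the $1+2r_1+2r_2\leq 5$ portion of the constant --- exactly the ``radial plus angular'' split and amortization you describe, just written as a direct per-path sum rather than a recursion in the sector angle. The only detail to watch when you instantiate the numbers is that the jump at round $k$ is only guaranteed to stay inside the round-$(k-1)$ sector (the bisections need not pass through the current robot), so the chord bound at round $k$ must use the parent sector's width; this off-by-one is what forces two full-length chord terms (worth $2r_1+2r_2\leq 4$) before the $2(\sin(\pi/4)+\sin(\pi/8)+\cdots)$ tail and is where the $7.9651$ rather than a smaller constant comes from.
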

\begin{proof}

At the end of the Arc-Strategy, all robots are awake. To determine the wake-up time of the Arc-Strategy, we need to calculate the time it takes to wake up all the robots. The Arc-Strategy produces a wake-up tree, and the wake-up time is the length of the longest path from the center to the leaves of this tree.
    
Without loss of generality, consider a path $p_0, \dots, p_k$, which starts at the root, $p_0$, and ends at a leaf, $p_k$, in the wake-up tree. We now define the following variables: (see Figure~\ref{arc}).
    
\begin{eqnarray*}
d_i & = & \|p_{i-1}-p_i\|_2 \\
r_i & = & \|p_i-O\|_2\text{, where  $O$ is the center of unit $l_2$-disk} \\
C(O,r_{i}) & = & \text{the circle with center $O$ and radius $r_i$} \\
c_i & = & \text{cross point of $\overline{p_iO}$ and $C(O,r_{i-1})$}\\
a_i & = & \|p_i-c_i\|_2\\
b_i & = & \|p_{i-1}-c_i\|_2\\
\alpha_i & = & \text{The angle between } \overline{Op_i} \text{ and } \overline{Op_{i+1}} \\
\beta_i & = & \text{The angle between } p_{i+1} \text{ and } c_{i+1} \text{ and } p_i 
\end{eqnarray*}

The total length of the path is expressed as:
\begin{eqnarray*}
 \sum_{i = 1}^{k}{d_i} = d_1 +d_2 +d_3+\sum_{i = 4}^{k}{\sqrt{b_i^2 + a_i^2 - 2a_ib_i \cos{\beta_{i-1}}}}
\end{eqnarray*}

Since $\beta_i = \frac{\pi}{2} + \frac{\alpha_i}{2}$, the path length simplifies to
\begin{eqnarray*}
 \sum_{i = 1}^{k}{d_i}=d_1 + d_2+ d_3+ \sum_{i = 4}^{k}{\sqrt{b_i^2 + a_i^2 + 2a_ib_i \sin{\frac{\alpha_{i-1}}{2}}}}
\end{eqnarray*}

Given $b_{i+1} = 2r_i \sin{\frac{\alpha_i}{2}}$, the total path length becomes:
\begin{eqnarray*}
 \sum_{i = 1}^{k}{d_i}& = d_1 +d_2+ d_3+&\sum_{i = 4}^{k}{\sqrt{a_i^2 + 4r_{i-1}^2(\sin{\frac{\alpha_{i-1}}{2}})^2 + 4a_ir_{i-1}(\sin{\frac{\alpha_{i-1}}{2}})^2}}
\end{eqnarray*}
 
We have, 
 
\begin{eqnarray*}
\sum_{i = 4}^{k}{\sqrt{a_i^2 + 4r_{i-1}^2(\sin{\frac{\alpha_{i-1}}{2}})^2 + 4a_ir_{i-1}(\sin{\frac{\alpha_{i-1}}{2}})^2}}\\
\leq  \sum_{i = 4}^{k}{\sqrt{a_i^2 + 4r_{i-1}^2(\sin{\frac{\alpha_{i-1}}{2}})^2 + 4a_ir_{i-1} \sin{\frac{\alpha_{i-1}}{2}}}} \\
  =  \sum_{i = 4}^{k}{\sqrt{(a_i + 2r_{i-1} \sin{\frac{\alpha_{i-1}}{2}})^2}}
  =  \sum_{i = 4}^{k}{a_i + 2r_{i-1} \sin{\frac{\alpha_{i-1}}{2}}}
\end{eqnarray*}

Now, we have the following inequality:
\begin{eqnarray*}
 \sum_{i = 1}^{k}{d_i} & \leq & d_1 +d_2+d_3+\sum_{i = 4}^{k}{a_i} + 2\sum_{i = 4}^{k}{r_{i-1} \sin{\frac{\alpha_{i-1}}{2}}}
\end{eqnarray*}
Note that $\sum_{i = 4}^{k}a_i = 1-r_3$ and $d_1 = r_1$ and $d_2 \leq r_1+r_2 $ and $d_3 \leq r_2+r_3 $ using triangle inequality so by replacing them and considering all $r_i \leq 1$ we have:
\begin{eqnarray*}
 \sum_{i = 1}^{k}{d_i} & \leq & r_1+ r_1+r_2+r_2+r_3+1-r_3 + 2\sum_{i = 3}^{k}{\sin{\frac{\alpha_i}{2}}}
\end{eqnarray*}
Given that $\alpha_i \leq \frac{\pi}{2^{i-2}}$ for $i>1$, we can write:
\begin{eqnarray*}
%  \sum_{i = 1}^{k}{d_i} & \leq & 1 + 2\sum_{i = 1}^{k}{\sin{\frac{\pi}{2^i}}} + 2r_1 \sin{\frac{\pi}{2}} \\
 \sum_{i = 1}^{k}{d_i} & \leq & 1 + 2r_1+2r_2+  2\left( \sin{\frac{\pi}{4}} + \sin{\frac{\pi}{8}} + \dots \right)
\end{eqnarray*}
Using the fact that for \(x \geq 0 \), \( \sin{x} \leq x \), we have:
\begin{eqnarray*}
 \sum_{i = 1}^{k}{d_i} & \leq & 1 + 2r_1+2r_2+ 2\left( \sin{\frac{\pi}{4}} + \sin{\frac{\pi}{8}} \right) + 2\pi\left(\frac{1}{16} + \dots + \frac{1}{2^n} \right)\\
 \sum_{i = 1}^{k}{d_i} & \leq & 1 + 2\left( \frac{\sqrt{2}}{2} + 0.3827 \right) + 2\pi \left(\frac{1}{16} + \dots \right) + 2r_1+2r_2 \\
 \sum_{i = 1}^{k}{d_i} & \leq & 1 + \left( 0.7654 + 1.4143 \right)+ \frac{\pi}{4} + 2r_1+2r_2 \\
 \sum_{i = 1}^{k}{d_i} & < & 3.9651 + 2r_1+2r_2 \leq 7.9651
\end{eqnarray*}
\end{proof}

\begin{figure}
    \newcommand{\labelsize}{\tiny}
\begin{center}
    \vspace{1cm}  % Adjust this value to control vertical spacing from the text above

    % Add required TikZ libraries
    \usetikzlibrary{decorations.pathreplacing}

    \begin{tikzpicture}[font=\labelsize,scale = 0.7] 

        % Draw circles with different radii
        \draw[thick] (0,0) circle [radius=3];   % Outer circle
        \draw[thick] (0,0) circle [radius=1.4]; % Inner circle
        
        % Draw points inside the circle

        % Resize the distance labels for clarity
        \node at (0.35,0.7) [text=purple!80] {\labelsize $d_1$};  % Distance d1
        \node at (1,-0.05) {\labelsize $r_2$};          % Radius r2
        \node at (0.95,0.55) [text=orange!80] {\labelsize $b_2$}; % Distance b2
        \node at (1.7,0.2) [text=orange!80] {\labelsize $a_2$}; % Distance a2
        \node at (1.6,0.95) [text=purple!80] {\labelsize $d_2$};  % Distance d2

        % Draw paths
        \draw[->, purple!80] (0,0) -- (1,1);    % Path from center to point A
        
        \draw[->, purple!80] (2,0.5) -- (2.1,1.3);    % Path from point B to point C
        \draw[->, purple!80] (2.1,1.3) -- (2.5,1.1);  % Path from point C to point D
        \draw[-, black] (0,0) -- (2,0.5); % Path from center to point B

        % b_2
        \draw[-, orange!80] (1,1) -- (1.356,0.338); % Path from A to a point

        % a_2
        \draw[-, orange!80] (1.35,0.338) -- (2.0,0.5); % Path from A to a point
        \draw[->, purple!80] (1,1) -- (2,0.5);  % Path from point A to point B

        % Add an angle \beta_1 between a2 and b2
        \draw[ blue!80] (1.47,0.38) arc[start angle=35, end angle=75, radius=0.3];
        \node at (1,0.85) [below right,text=blue!80] {\fontsize{0.5}{0.5} $\beta_1$};  % The first number is the font size, the second is the line spacing
        \node at (1.17,0.85) [below right,text=blue!80] {$\beta_1$};

        % Draw a curve for \alpha_1
        \draw[thick,blue!80] (0.2, 0.2) .. controls (0.25, 0.12) .. (0.25, 0.05); % Curve

        % Label the highlighted part of the curve for \alpha_1
        \node at (0.54,0.03) [above][text=blue!80] {\labelsize $\alpha_1$}; % Label for the curve

        \fill (0,0) circle [radius=2pt];  % Center point
        \fill (1,1) circle [radius=2pt];  % Point p_1
        \fill (2,0.5) circle [radius=2pt];% Point p_2
        \fill (2.1,1.3) circle [radius=2pt];% Point p_3
        \fill (2.5,1.1) circle [radius=2pt];% Point p_4

        % Label points
        \node at (-0.3,-0.3) {\labelsize O};          % Label for center point
        \node at (1.2,1.2) {\labelsize $p_1$};        % Label for point p_1
        \node at (2.3,0.65) {\labelsize $p_2$};        % Label for point p_2
        \node at (2.1,1.5) {\labelsize $p_3$};        % Label for point p_3
        \node at (2.6,0.9) {\labelsize $p_4$};        % Label for point p_4

    \end{tikzpicture}
\end{center}
    \Description{A figure of a path in the Arc-Strategy. This path starts from the center and visits the points in the order specified by the arc strategy.}
  \caption{A path in the Arc-Strategy}
    \label{arc}
\end{figure}
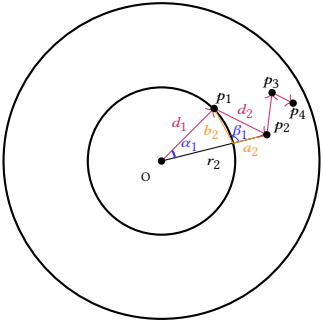

Thus, using the Arc-Strategy, FTP in $(\mathbb{R}^2, l_2)$ can be solved in at most $7.9651r$ time units, where $r$ is the radius of the disk containing all the asleep robots, and the active robot is at the center.

The Arc-Strategy cannot be extended to $\mathbb{R}^3$. In $\mathbb{R}^2$, the strategy works by reducing the arc size each active robot covers, minimizing the distance $d_i$ each robot travels to activate sleeping robots. This sum of distances converges efficiently in $\mathbb{R}^2$, but does not converge in $\mathbb{R}^3$. Thus, in the next section, we propose a different approach for $\mathbb{R}^3$.

\subsubsection{Ring-Strategy}
We now introduce the Ring-Strategy. For a given center point, a ring is defined by its inner radius $r_1$ and outer radius $r_2$, consisting of all points at a distance $r$, where $r_1 \leq r \leq r_2$, from the center. Suppose two awake robots, $p_0$ and $p'_0$, are positioned at distance $r_1$ from the center. We focus on a ring with inner radius $r_1$ and outer radius $1$, and outline a strategy to wake up the robots within the ring between $r_1$ and $r_2=1$. The Ring-Strategy is described as follows:

\begin{itemize}
    \item Each of the two awake robots is responsible for waking up one half of the ring. The ring is split into two equal halves by drawing a diameter through $C(O, r_1)$ that passes through $p_0$. Next, we explain the strategy for $p_0$.
    
    \item Consider the projections of the points in $p_0$'s half of the ring onto the circle $C(O, r_1)$. The projection of a point $p_j$ onto the circle $C(O, r)$ is the intersection of $\overline{Op_j}$ with the circle $C(O, r)$. Let $p_1$ be the point whose projection is closest to $p_0$. Then, $p_0$ activates $p_1$.

\item In the next step, $p_0$ and $p_1$ divide their half of the ring into two smaller half-rings using the circle $C(O, \frac{r_1+1}{2})$. This creates one half-ring with an inner radius of $r_1$ and an outer radius of $\frac{r_1+1}{2}$, and another with an inner radius of $\frac{r_1+1}{2}$ and an outer radius of $1$. Now, each of $p_0$ and $p_1$ is responsible for activating robots in their respective half-rings.

\item As before, $p_0$ activates the point in its half-ring whose projection onto $C(O, r_1)$ is closest to it, while $p_1$ activates the point in its half-ring whose projection onto $C(O, \frac{r_1+1}{2})$ is closest to $p_1$.

\item In each subsequent step, the active point $p_l$ in a half-ring with inner radius $r_i$ and outer radius $r_j$ identifies the point in its half-ring whose projection onto $C(O, r_i)$ is closest and activates it. The half-ring is then split into two smaller half-rings: one with inner radius $r_i$ and outer radius $\frac{r_i + r_j}{2}$, and the other with inner radius $\frac{r_i + r_j}{2}$ and outer radius $r_j$. At each step, the thickness of the half-ring is halved, with each robot responsible for waking up the robots in their respective half-ring.

\item Meanwhile, similar to $p_0$, $p'_0$ activates the robots in its half of the ring. By the end of this process, all robots within the ring are activated.
\end{itemize}

We now calculate the wake-up time for this strategy, specifically the time required to activate all the robots in a half-ring when the active robot is $p_0$. Since there are two awake robots at the same initial position, $p'_0$ activates the robots in its own half in parallel. Therefore, the upper bound on the wake-up time is the same for both halves, and all robots will be awake within this time.

Let $p_0, p_1, \dots, p_k$ represent a path in the wake-up tree of the Ring-Strategy, with the root at $p_0$. We now define the following variables (see Figure~\ref{ring}):

\begin{eqnarray*}
d_i & = & \|p_{i-1}-p_i\|_2 \nonumber \\
(r_i,\alpha_i) & = & \text{polar coordinates of the location of $p_i$} \nonumber \\
a_i & = & \text{the distance between } (r_{i-1}, \alpha_{i-1}) \text{ and } (r_{i-1}, \alpha_{i}) \nonumber \\
& &\text{i.e the distance between } p_i \text{ and image of } \\ & & p_{i+1} \text{ on } C(O,r_i)\\
 b_i & = & \text{the distance between } (r_{i-1}, \alpha_{i}) \text{ and } (r_{i}, \alpha_{i}) \nonumber \\
& & \text{i.e the distance between } p_{i} \text{ and image of } 
\\ & & p_{i} \text{ on } C(O,r_{i-1}) \nonumber \\
\end{eqnarray*}
The total length of the path $p_0,p_1,\dots p_k$ is given by:

\begin{eqnarray*}
\sum_{i = 1}^{k}{d_i} \leq \sum_{i = 1}^{k}{a_i} + \sum_{i = 1}^{k}{b_i}. 
\end{eqnarray*}
Since we split the half-ring associated with each active point at each step, we have $b_{i+1} \leq \frac{1-r_1}{2^{(i-1)}}$ for $i \geq 1$ and $\sum_{i = 1}^{k} a_i \leq \pi$. Thus, 
\begin{eqnarray*}
\sum_{i = 1}^{k}{d_i} \leq  \pi + b_1 + \sum_{i = 2}^{k}{\frac{1 - r_1}{2^{(i-2)}}} \\
\\ \leq \pi + 1 - r_1 + 2(1 - r_1)=\pi + 3(1 - r_1) 
\end{eqnarray*}

\begin{figure}
\begin{tikzpicture}[scale=0.6]

    % Define radii
    \def\rone{2} % r_1
    \def\rtwo{3.5} % r_2

    % Draw the circles
    \draw[thick] (0,0) circle (\rone);
    \draw[thick] (0,0) circle (\rtwo);

    % Origin O
    \filldraw (0,0) circle (2pt) node[below right] {\tiny \(O\)};

    % Draw dotted lines from points on the path to O
    % to a
    \draw[dotted] (0,0) -- (\rtwo*0.707, \rtwo*0.707); % 45 degrees 
    % line to b
    \draw[dotted] (0,0) -- (\rtwo*0.95, \rtwo*0.35);
    % line to c
    \draw[dotted] (0,0) -- (\rtwo*0.98, \rtwo*0.21);

    % draw lines of path
    \draw[-] (\rone*0.707, \rone*0.707) -- (\rtwo*0.85, \rone*0.55) node [above left, xshift=-7pt] {\tiny \(d_1\)}; % 45 degrees
    \draw[-] (\rtwo*0.85, \rone*0.55) -- (\rtwo*0.8, \rone*0.3); 
    % a_1
    \draw[-, thick,orange!80] (\rone*0.707, \rone*0.707) -- (\rone*0.93, \rone*0.345) node[left , yshift = 2pt] {\tiny \(a_1\)};
    % b_1
    \draw[-, thick,orange!80] (\rone*0.93, \rone*0.345) -- (\rtwo*0.85, \rone*0.55) node[below left] {\tiny \(b_1\)};

    % a_2
    \draw[-, thick,purple!80] (\rtwo*0.85, \rone*0.55) -- (\rtwo*0.89, \rone*0.335) node[shift={(0.1,0.1)}] {\tiny \(a_2\)};
    % b_2
    \draw[-, thick,purple!80] (\rtwo*0.8, \rone*0.3) -- (\rtwo*0.89, \rone*0.335) node[below] {\tiny \(b_2\)};

    % Define point A on the inner circle
    \draw[fill] (\rone*0.707, \rone*0.707) circle (2pt) node[left] {\tiny \(p_0\)};
    \draw[fill] (\rtwo*0.85, \rone*0.55) circle (2pt) node[above] {\tiny \(p_1\)};
    \draw[fill] (\rtwo*0.8, \rone*0.3) circle (2pt) node[below left] {\tiny \(p_2\)};

\end{tikzpicture}

    \caption{A path in the Ring-Strategy.}
    \label{ring}
    \Description{The figure shows a path obtained by the Ring-Strategy. The ring is drawn based on closest point to the center of the shape.}
\end{figure}
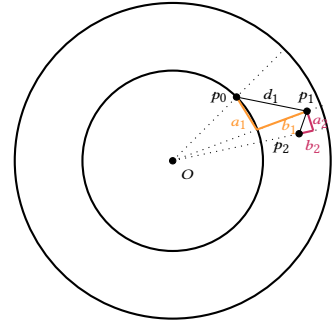

Thus, we arrive at the following lemma.
\begin{lemma}
If in an instance of FTP all the robots are inside a ring with inner radius $r_1$ and outer radius of $1$, and we have two awake robots in the boundary of $C(O,r_1)$, then Ring-Strategy activates all robots in $\pi+3-3r_1$ time units.
 \label{ringratio}   
\end{lemma}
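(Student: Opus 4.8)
The plan is to take the per-path analysis already set up for the Ring-Strategy and turn the chain of displayed inequalities into a self-contained argument. Since $p_0$ and $p_0'$ work in parallel and symmetrically on the two halves of the ring, the makespan of the strategy equals the length of a longest root-to-leaf path $p_0,p_1,\dots,p_k$ in the wake-up tree of one half, so it suffices to bound $\sum_{i=1}^k d_i$ for such a path, where $p_i=(r_i,\alpha_i)$ in polar coordinates about $O$. The first move is the edge-by-edge detour: replace the segment $\overline{p_{i-1}p_i}$ by the two-piece path $p_{i-1}\to(r_{i-1},\alpha_i)\to p_i$ through the image of $p_i$ on $C(O,r_{i-1})$, so that the triangle inequality gives $d_i\le a_i+b_i$, where $a_i$ is the chord of $C(O,r_{i-1})$ cut off by the angular difference $\alpha_i-\alpha_{i-1}$ and $b_i=|r_i-r_{i-1}|$ is the radial jump (see Figure~\ref{ring}). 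Summing, $\sum_i d_i\le\sum_i a_i+\sum_i b_i$, and the two sums are then estimated independently.

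For the radial sum I would use that $p_i$ and $p_{i+1}$ always lie in a common half-ring: $p_{i+1}$ is chosen from the half-ring inherited by $p_i$, which is itself a half of the half-ring inherited by $p_{i-1}$, and so on. Since the Ring-Strategy bisects the radial thickness at every activation, the half-ring containing both $p_i$ and $p_{i+1}$ has thickness at most $(1-r_1)/2^{\,i-1}$, hence $b_{i+1}\le(1-r_1)/2^{\,i-1}$ for $i\ge1$, while $b_1\le1-r_1$ because $p_0$ lies on $C(O,r_1)$ and $p_1$ lies in the ring. A geometric series then gives $\sum_i b_i\le(1-r_1)+(1-r_1)\sum_{j\ge0}2^{-j}=3(1-r_1)$.

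The angular bound $\sum_i a_i\le\pi$ is, I expect, the real obstacle, since the displayed computation takes it for granted. Bounding each chord by its arc and using $r_{i-1}\le1$ shows $a_i\le\alpha_i-\alpha_{i-1}$ provided the angles increase, so the statement reduces to proving $\alpha_0\le\alpha_1\le\dots\le\alpha_k$ with $\alpha_k-\alpha_0\le\pi$. I would prove the monotonicity by maintaining the invariant that the half-ring $H$ currently assigned to a robot lies entirely at angles $\ge$ that robot's own angle: this holds initially because $p_0$ sits on the diameter bounding its $\pi$-wide sector, and it is preserved because a robot activates the point of $H$ whose projection onto the inner circle of $H$ is closest to it -- which, under the invariant, is exactly the point of $H$ of smallest angle -- and then hands on a sub-region of $H$, so the new owner's angle is $\ge$ the old owner's and still no larger than every angle in the sub-region. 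Monotonicity together with the fact that $p_0$'s half occupies an angular window of width $\pi$ gives $\sum_i a_i\le\alpha_k-\alpha_0\le\pi$.

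Putting the pieces together yields $\sum_i d_i\le\pi+3(1-r_1)=\pi+3-3r_1$, and since the mirror process run by $p_0'$ on the other half finishes within the same bound in parallel, this is the wake-up time of the Ring-Strategy, which is the claimed bound. Everything apart from the angular monotonicity is the triangle inequality plus summing a geometric series; the care goes into checking that the greedy ``closest projection'' rule produces a monotone angular sweep rather than a zig-zag, and in pinning down precisely which of the two new half-rings each robot inherits so that this invariant is maintained.
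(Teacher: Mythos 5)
Your proposal is correct and follows essentially the same route as the paper: decompose each edge as $d_i\le a_i+b_i$ via the projection onto the inner circle, bound the radial jumps by the geometrically shrinking half-ring thicknesses to get $\sum b_i\le 3(1-r_1)$, and bound the angular chords by $\pi$. The only difference is that you explicitly justify $\sum a_i\le\pi$ via the angular-monotonicity invariant, a step the paper's proof asserts without argument, and your justification is sound.
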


\subsubsection{Our combined strategy for FTP in $(\mathbb{R}^2,l_2)$}

We now propose a combined wake-up strategy based on the Arc-Strategy and Ring-Strategy. Given an instance of FTP in $(\mathbb{R}^2, l_2)$ with an active robot at the center, let $r_1$ represent the distance from the center to the nearest asleep robot.

% new lines by kb started ...
In the first step, $p_0$ moves toward $p_1$, and now we have two awake robots. Divide the disk into two halves by drawing the diameter that passes through $\overline{Op_1}$. Now we have two active robots in the location of $p_1$ and each robot is responsible for one half.

Now, assume we want to wake the right half using $p_0$.
Let $p_2$ be the second nearest point to the center in the right half of the ring and $r_2$ represent its distance from the center. If $r_2 \leq 0.3627$, then using the Arc-Strategy, this half can be activated in $2r_1+2r_2+3.9651\leq 4r_2+3.9651\leq 5.4162$ time units, see the proof of Lemma~\ref{arcratio}. Otherwise, using the Ring-Strategy, $p_0$ moves $r_2 - r_1$ units in the direction of $\overrightarrow{Op_1}$. The half-ring on the right now has an inner radius of $r_2$ and an outer radius of $1$. So, we can activate the robots in this half-ring in $r_1+(r_2-r_1)+3+\pi-3r_2 < 6.1416-2\times 0.3627 = 5.4162$ time units, see the proof of Lemma \ref{ringratio}.

Similarly, let $p_{2'}$ be the second nearest point to the center in the left half of the disk. Again, based on the length of $r_{2'}$, we choose the best strategy for activating the robots in this half, which takes at most $5.4162$ time units. Thus, the proof is complete. 
% new lines by kb ended.

% comented lines by kb started :

% If $r_1 \leq 0.039125$, we apply the Arc-Strategy, Lemma~\ref{arcratio}, resulting in a makespan of less than $5.9851 + 2r_1 \leq 6.06335$. Since $r_1 \leq 0.039125$, the desired makespan is achieved.
% If $r_1 > 0.039125$, the robot at the center, $p_0$, moves to the nearest robot, which takes $r_1$ time units. At this point, all remaining asleep robots are located within a ring with an inner radius of $r_1$ and an outer radius of $1$. By applying the Ring-Strategy (as detailed in Lemma \ref{ringratio}), the robots can be activated in $\pi + 3 - 3r_1$ time units. Therefore, the total time to activate all robots is at most $r_1 + \pi + 3 - 3r_1 \leq 6.06335 \simeq 6.064$.
% comented lines by kb ended

\subsection{Some examples}

We show that for certain values of $n$, the wake-up ratio for $n$ points is not achieved when the points are equally distributed on the unit circle by presenting specific instances and computing their makespan.
% We start with the following statement of  \cite{abs-2402-03258} which says that one of the following statements are surely wrong:\\
% (1) the wake-up ratio is attained for points that are equally distributed on unit circle.\\
% (2) for every $n \geq 4$, $\gamma_n(l_2) > \gamma_{n+1}(l_2)$.
% We present  two instances that show the first statement is wrong. 
Our first instance is a set of $5$ asleep robots, the makespan of this instance is $3.530$, see Figure~\ref{fiveseven}. When five points are equally distributed on the cycle then the makespan is $3.351$. 
Our second example is a set of $7$ asleep robots, the makespan of this instance is $3.498$, see Figure~\ref{fiveseven}. When seven points are equally distributed on the cycle then the makespan is $3.431$. The makespans of these instances are computed using a computer program by brute force search. Our code is publicly available online\footnote{\href{https://github.com/sahroush/Geometric-Freeze-Tag-Problem/blob/main/Calculations/MakespanCalculator.cpp}{Makespan Calculator Code}}. 

\begin{figure}
\begin{center}

% First subfigure: Circle with its corresponding table
\begin{subfigure}[t]{0.49\textwidth}
    \centering
    % Circle and table in a single figure
    \begin{minipage}[t]{0.45\textwidth} % Table on the right
        \vspace{0pt} % Ensure alignment at the same vertical level
        \centering
        \begin{tabular}{|c|c|c|c|}
        \hline
        Point & $\theta$ & x & y \\
        \hline
        $p_1$ & \tiny{$31.52$} & \tiny{$0.852$} & \tiny{$0.522$} \\
        \hline
        $p_2$ & \tiny{$101.98$} & \tiny{$-0.207$} & \tiny{$0.978$} \\
        \hline
        $p_3$ & \tiny{$148.90$} & \tiny{$-0.856$} & \tiny{$0.516$} \\
        \hline
        $p_4$ & \tiny{$195.69$} & \tiny{$-0.962$} & \tiny{$-0.270$} \\
        \hline
        $p_5$ & \tiny{$225.69$} & \tiny{$-0.698$} & \tiny{$-0.715$} \\
        \hline
        $p_6$ & \tiny{$271.63$} & \tiny{$0.028$} & \tiny{$-0.999$} \\
        \hline
        $p_7$ & \tiny{$327.87$} & \tiny{$0.846$} & \tiny{$-0.531$} \\
        \hline
        \end{tabular}
    \end{minipage}
    \hfill
    \hspace{0.05\textwidth}
    \begin{minipage}[t]{0.45\textwidth} % Table on the right
        \vspace{0pt} % Ensure alignment at the same vertical level
        \centering
        \begin{tabular}{|c|c|c|c|}
        \hline
        Point & $\theta$ & x  & y  \\
        \hline
        $p_1$ & \tiny{$17.37$} & \tiny{$0.954$} & \tiny{$0.298$} \\
        \hline
        $p_2$ & \tiny{$96.22$} & \tiny{$-0.108$} & \tiny{$0.994$}\\
        \hline
        $p_3$ & \tiny{$174.42$} & \tiny{$-0.995$} & \tiny{$0.097$}\\
        \hline
        $p_4$ & \tiny{$219.75$} & \tiny{$-0.768$} & \tiny{$-0.639$}\\
        \hline
        $p_5$ & \tiny{$299.27$} & \tiny{$0.488$} & \tiny{$-0.872$}\\
        \hline
        \end{tabular}
    \end{minipage}
    % \caption{Coordinates of five points with a makespan of $3.530$.}
    % \Description{This figure shows fice points inside a unit disk in l1 with a makespan of 3.530.}
    % \label{fivesevena}
\end{subfigure}

% Spacing between the subfigures
\hfill
\hspace{1cm}

% Second subfigure: Circle with its corresponding table

\caption{Two instances of FTP in $(\mathbb{R}^2,l_2)$ showing that the wake-up ratio is not attained for points equally distributed on the unit circle where \( \theta \) is the angle from the positive \( x \)-axis in degree.}
\label{fiveseven}
\end{center}
\end{figure}
Furthermore, we generated over $10^5$ random instances for different values of $n$ and computed their makespans, leading to the following conjecture:

\begin{conjecture}
    The maximum makespan of $n$ points is attained when they are on the boundary of the unit $l_2$-disk.
\end{conjecture}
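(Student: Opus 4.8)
\medskip
\noindent\emph{A proof strategy.} The plan is to combine a compactness argument with a ``push the points to the boundary'' operation, and the real difficulty lies in the last step. Fix $n$ and write $M(P)$ for the makespan of an optimal wake-up tree on the point set $P$ with the active robot at the origin $O$. For each of the finitely many combinatorial wake-up trees $\tau$ on $n$ labelled points (root with one child, every other node with at most two children), the makespan of $\tau$ is the maximum over its root--leaf paths of a sum of Euclidean edge lengths, hence a continuous function of the coordinates of the $n$ points; taking the minimum over the finitely many $\tau$, the map $P\mapsto M(P)$ is continuous on the compact set $(\overline{B}(O,1))^{\,n}$. Consequently the supremum that defines $\gamma_n(\mathbb{R}^2,l_2)$ is \emph{attained} at some configuration $P^\star$, and it suffices to show that one may take $P^\star\subseteq\partial B(O,1)$.

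First I would check that at least one point of $P^\star$ lies on the unit circle: if $r^\star:=\max_{p\in P^\star}\|p\|_2$ were $<1$, then the rescaled configuration $\frac{1}{r^\star}P^\star$ would still lie in the unit disk but would have makespan $\frac{1}{r^\star}M(P^\star)>M(P^\star)$, contradicting maximality; hence $r^\star=1$. The goal is then to ``lift'' the remaining interior points of $P^\star$ to the circle, one layer at a time, without decreasing $M$.

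The natural operation is radial projection, $p\mapsto p/\|p\|_2$. The obstruction is that this operation is not monotone edge by edge: it lengthens an edge $\overline{p_ip_j}$ when $p_i$ and $p_j$ have noticeably different polar angles, but it can \emph{shorten} the edge when they are nearly radially aligned (two points on a common ray through $O$ even collapse to a single point). Hence one cannot simply reuse the optimal tree of $P^\star$ on the projected configuration, and a purely local ``same-tree'' comparison fails. The proposed remedy is variational: near $P^\star$ only finitely many combinatorial trees are simultaneously optimal, and for each of them the makespan is piecewise smooth; one would exhibit a feasible motion — moving interior points outward, processing them from $O$ outward and simultaneously spreading any nearly radially aligned pair tangentially so the shortening edges are neutralised — along which the maximum of these finitely many functions is nondecreasing, and then integrate this motion until every point reaches $\partial B(O,1)$. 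By maximality the resulting boundary configuration then has makespan exactly $M(P^\star)=\gamma_n(\mathbb{R}^2,l_2)$, which proves the conjecture (and, taking the maximum over $n$, the analogous statement for $\gamma_{(\mathbb{R}^2,l_2)}$).

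The main obstacle is precisely this last step. Because $M$ is a \emph{minimum} over combinatorial trees it is nonsmooth, and because radial projection can shorten edges between nearly radially aligned points, one needs a global rather than edge-local accounting: for instance, a structural result that an extremal optimal tree never joins two nearly radially aligned points, or an argument that the edges which shorten always carry enough slack, or an explicit family of boundary instances whose makespans match the general upper bound of Theorem~\ref{mix}. A sensible first target is to prove the conjecture for small $n$ — consistent with the instances reported in Figure~\ref{fiveseven} — perhaps by inducting on $n$ while peeling the deepest leaves off an optimal wake-up tree, and only afterwards attempting the general variational argument.
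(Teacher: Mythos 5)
The statement you are addressing is labelled a \emph{conjecture} in the paper: the authors offer no proof, only empirical evidence from over $10^5$ random instances together with the two explicit seven-point and five-point examples of Figure~\ref{fiveseven}. So there is no proof in the paper to compare yours against, and the question is simply whether your argument closes the conjecture. It does not, and you say so yourself. The preliminary steps are sound: for each fixed combinatorial wake-up tree the makespan is a continuous (indeed piecewise-smooth) function of the coordinates, the optimal makespan $M(P)$ is a minimum over finitely many such trees and hence continuous, so a maximizer $P^\star$ exists on the compact set $(\overline{B}(O,1))^n$; and the scaling argument correctly forces $\max_{p\in P^\star}\|p\|_2=1$ (modulo the degenerate all-points-at-origin case, where $M=0$ and there is nothing to maximize). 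None of this is the content of the conjecture, which is that \emph{every} point of some maximizer can be taken on the circle.

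The gap is exactly where you locate it, and it is not a technicality but the entire difficulty. Radial projection $p\mapsto p/\|p\|_2$ is not edge-monotone --- it strictly shortens an edge whenever the angular separation of its endpoints is small relative to their radial separation, and collapses radially aligned points entirely --- so the ``reuse the optimal tree'' comparison fails, and more importantly $M$ is a minimum over trees, so even if the currently optimal tree's paths all lengthen, a \emph{different} tree could become shorter on the projected configuration and drop $M$. Your proposed remedy (a continuous outward motion along which the pointwise minimum of the finitely many active tree-makespans is nondecreasing, with tangential spreading to neutralise shortening edges) is a description of what a proof would have to accomplish, not an argument: you give no invariant guaranteeing such a direction exists in the tangent cone at every non-boundary configuration, no control over which trees enter the active set as the motion proceeds, and no treatment of the collision/collapse degeneracies. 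As stated, the proposal is a (reasonable) research plan with an open core step; the conjecture remains open after it, just as it is in the paper. If you want a tractable partial result consistent with the paper's evidence, proving the $n=2$ and $n=3$ cases by direct case analysis of the few possible wake-up trees would already be a contribution the authors do not supply.
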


%%%%%%%%%%%%%%%%%%%%%%%%%%%%%%%%%%%%%%%%%%%%%%%%%%%%%%%%%%%%%%%%%%%%%%%%

\section{FTP in \texorpdfstring{$(\mathbb{R}^3,l_1)$}{R3, l1}}

\newcommand{\drawRamielFront}[5]{
	\begin{scope}[shift={(#1,#2,#3)},tdplot_rotated_coords,scale=#4]
		
		\coordinate (A) at ( 1, 0, 0);  % (+1, 0, 0)
		\coordinate (B) at (-1, 0, 0);  % (-1, 0, 0)
		\coordinate (C) at ( 0, 1, 0);  % (0, +1, 0)
		\coordinate (D) at ( 0,-1, 0);  % (0, -1, 0)
		\coordinate (E) at ( 0, 0, 1);  % (0, 0, +1)
		\coordinate (F) at ( 0, 0,-1);  % (0, 0, -1)
		
		% Point at the origin
		
		\filldraw[fill=#5!60, draw=black, opacity=0.5]  (A) -- (C) -- (F) -- cycle;  % Face 5
		\filldraw[fill=#5!60, draw=black, opacity=0.5]  (A) -- (D) -- (F) -- cycle;  % Face 6
		\filldraw[fill=#5!60, draw=black, opacity=0.5]   (B) -- (C) -- (F) -- cycle;  % Face 7
		\filldraw[fill=#5!60, draw=black, opacity=0.5] (B) -- (D) -- (F) -- cycle;  % Face 8
		
	\end{scope}
}

\newcommand{\drawRamielDownFront}[5]{
	\begin{scope}[shift={(#1,#2,#3)},tdplot_rotated_coords,scale=#4]
		
		\coordinate (A) at ( 1, 0, 0);  % (+1, 0, 0)
		\coordinate (B) at (-1, 0, 0);  % (-1, 0, 0)
		\coordinate (C) at ( 0, 1, 0);  % (0, +1, 0)
		\coordinate (D) at ( 0,-1, 0);  % (0, -1, 0)
		\coordinate (E) at ( 0, 0, 1);  % (0, 0, +1)
		\coordinate (F) at ( 0, 0,-1);  % (0, 0, -1)
		
		% Point at the origin
		
		\filldraw[fill=#5!60, draw=black, opacity=0.5]  (A) -- (D) -- (F) -- cycle;  % Face 6
	\end{scope}
}

\newcommand{\drawRamielUpperLeft}[5]{
	\begin{scope}[shift={(#1,#2,#3)},tdplot_rotated_coords,scale=#4]
		
		\coordinate (A) at ( 1, 0, 0);  % (+1, 0, 0)
		\coordinate (B) at (-1, 0, 0);  % (-1, 0, 0)
		\coordinate (C) at ( 0, 1, 0);  % (0, +1, 0)
		\coordinate (D) at ( 0,-1, 0);  % (0, -1, 0)
		\coordinate (E) at ( 0, 0, 1);  % (0, 0, +1)
		\coordinate (F) at ( 0, 0,-1);  % (0, 0, -1)
		
		% Point at the origin

		\filldraw[fill=#5!60, draw=black, opacity=0.5]   (B) -- (C) -- (F) -- cycle;  % Face 7

	\end{scope}
}

\newcommand{\drawRamielUpperRight}[5]{
	\begin{scope}[shift={(#1,#2,#3)},tdplot_rotated_coords,scale=#4]
		
		\coordinate (A) at ( 1, 0, 0);  % (+1, 0, 0)
		\coordinate (B) at (-1, 0, 0);  % (-1, 0, 0)
		\coordinate (C) at ( 0, 1, 0);  % (0, +1, 0)
		\coordinate (D) at ( 0,-1, 0);  % (0, -1, 0)
		\coordinate (E) at ( 0, 0, 1);  % (0, 0, +1)
		\coordinate (F) at ( 0, 0,-1);  % (0, 0, -1)
		
		% Point at the origin

		% Fill the triangular faces with different colors
		\filldraw[fill=#5!120, draw=black, opacity=0.5]   (A) -- (C) -- (E) -- cycle;  % Face 1
		
	\end{scope}
}

\newcommand{\drawPyramidHelp}[5]{
	\begin{scope}[shift={(#1,#2,#3)},tdplot_rotated_coords,scale=#4]
		
		\coordinate (A) at ( 1, 0, 0);  % (+1, 0, 0)
		\coordinate (B) at (-1, 0, 0);  % (-1, 0, 0)
		\coordinate (C) at ( 0, 1, 0);  % (0, +1, 0)
		\coordinate (D) at ( 0,-1, 0);  % (0, -1, 0)
		\coordinate (E) at ( 0, 0, 1);  % (0, 0, +1)
		\coordinate (F) at ( 0, 0,-1);  % (0, 0, -1)
		
		% Point at the origin

		% Fill the triangular faces with different colors
		\filldraw[fill=#5!120, draw=black, opacity=0.5]   (D) -- (B) -- (E) -- cycle;  % Face 1
		
	\end{scope}
}

\newcommand{\drawRamielBack}[5]{
	\begin{scope}[shift={(#1,#2,#3)},tdplot_rotated_coords,scale=#4]
		
		\coordinate (A) at ( 1, 0, 0);  % (+1, 0, 0)
		\coordinate (B) at (-1, 0, 0);  % (-1, 0, 0)
		\coordinate (C) at ( 0, 1, 0);  % (0, +1, 0)
		\coordinate (D) at ( 0,-1, 0);  % (0, -1, 0)
		\coordinate (E) at ( 0, 0, 1);  % (0, 0, +1)
		\coordinate (F) at ( 0, 0,-1);  % (0, 0, -1)
		
		% Point at the origin

		% Fill the triangular faces with different colors
		\filldraw[fill=#5!60, draw=black, opacity=0.5]   (A) -- (C) -- (E) -- cycle;  % Face 1
		\filldraw[fill=#5!60, draw=black, opacity=0.5]  (A) -- (D) -- (E) -- cycle;  % Face 2
		\filldraw[fill=#5!60, draw=black, opacity=0.5] (B) -- (C) -- (E) -- cycle;  % Face 3
		\filldraw[fill=#5!60, draw=black, opacity=0.5] (B) -- (D) -- (E) -- cycle;  % Face 4

	\end{scope}
}

\tdplotsetmaincoords{20}{-5}

\label{ftpr3l1section}
In this section, we study FTP in $(\mathbb{R}^3, l_1)$. The unit $l_1$-ball in $\mathbb{R}^3$, known as a cross-polytope, has all points on its surface exactly one unit from the center, giving it a radius of $1$. A cross-polytope with radius $r$ is simply the unit ball scaled by $r$. Its diameter is $d = 2r$, meaning any two points within it are at most $d$ units apart, and the distance from the center to any point is at most $r = \frac{d}{2}$.

In the FTP, the awake active robot starts at the center point $p_0 = (0,0,0)$. There are $n$ asleep robots, each within at most one unit from $p_0$ in the $l_1$ norm.
We begin by stating three lemmas.

\begin{lemma}
\label{less128}
   An awake robot at the center of a cross-polytope with radius $r$ can wake any set of $n \leq 127$ asleep robots in $13r$ time units.
\end{lemma}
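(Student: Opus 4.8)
The plan is to exploit the combinatorial identity $127 = 2^7 - 1$ together with the two elementary facts about the radius-$r$ cross-polytope already recorded in this section: its $l_1$-diameter is $2r$, while its center lies within $l_1$-distance $r$ of every point it contains. Concretely, I would fix once and for all a combinatorial wake-up tree of depth $7$: the root $p_0$ has a single child $p_1$, and the subtree rooted at $p_1$ is a complete binary tree of depth $6$. Counting nodes by level, the subtree rooted at $p_1$ has $1 + 2 + 4 + \cdots + 2^6 = 2^7 - 1 = 127$ nodes, i.e. $p_0$ aside there are exactly $127$ places to put robots. Hence for any $n \le 127$ we can assign the $n$ asleep robots to $n$ distinct non-root nodes — for instance filling the levels of the tree in breadth-first order, which guarantees that whenever a node is used its parent is used, and leaving the surplus nodes empty when $n < 127$. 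This assignment is always realizable as an actual schedule because the cross-polytope imposes no reachability obstruction: in each parallel round every awake robot can simply travel to whatever asleep robot it has been assigned, so the only genuine constraint is the capacity of the tree, which is precisely $127$.

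Next I would bound the makespan, which equals the maximum weighted root-to-leaf path length of this tree. The first edge $\overline{p_0 p_1}$ has length $\|p_1 - p_0\|_1 \le r$, since $p_0$ is the center of the radius-$r$ cross-polytope and every robot lies within $l_1$-distance $r$ of it. Every other edge of the tree joins two points both lying inside the cross-polytope and therefore has $l_1$-length at most the diameter $2r$. A root-to-leaf path consists of the edge $\overline{p_0 p_1}$ followed by $6$ further edges, so its total length is at most $r + 6 \cdot 2r = 13r$, which is exactly the claimed bound; since this holds for the longest such path, every robot is awake by time $13r$.

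The only point at which the argument is not completely automatic — and hence the "main obstacle", such as it is — is the first-edge savings: bounding all seven edges by the diameter $2r$ would give only $14r$, so it is essential to observe that the step out of the center of the cross-polytope costs at most $r$ rather than $2r$. Everything else (the existence of a depth-$7$ binary wake-up tree of capacity $2^7-1$, and the diameter bound on the remaining six edges) is routine. I would also note that the bound is oblivious to the actual positions of the robots beyond these two distance estimates, which is precisely why this lemma can serve as the base case of the recursive partition-based strategy used later for $n > 127$.
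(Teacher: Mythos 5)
Your proof is correct and takes essentially the same approach as the paper: the paper states it as a doubling process (reach one robot in time $r$, then double the number of awake robots every $d=2r$ units for six rounds, giving $2\cdot 2^6=128$ awake robots by time $r+6\cdot 2r=13r$), which is exactly the breadth-first execution of your depth-$7$ wake-up tree with the same first-edge saving of $r$ versus the diameter $2r$.
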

\begin{proof}
The initially awake robot, starting at the center of the cross-polytope, can reach any of the $n$ sleeping robots in $r$ time units. Once the first robot is awakened, both robots can each wake another sleeping robot in $d = 2r$ time units. This doubling process continues, with the number of awake robots doubling every $d$ time units. After $6d = 12r$ time units, there will be at least $2 \times 2^6 = 128$ awake robots. Thus, using this strategy, up to 127 robots can be awakened in $13r$ time units.

\end{proof}

\begin{lemma}
\label{sevenrobots}
  Given an awake robot at the center of a cross-polytope with radius $r$ and $n \geq 128$ sleeping robots, seven of those robots can be awakened in $3r$ time units.
\end{lemma}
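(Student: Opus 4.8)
The plan is to combine a constant-size decomposition of the cross-polytope into pieces of small $l_1$-diameter with a shallow wake-up tree. Write $C=\{p\in\mathbb{R}^3:\|p\|_1\le r\}$ for the radius-$r$ cross-polytope. First I would cover $C$ by $14$ regions, each of $l_1$-diameter at most $r$; after an arbitrary tie-break this is a partition, so since $n\ge 128>14\cdot 6$ some region $S$ contains at least $7$ of the sleeping robots. The awake robot at the center can reach any point of $C$ --- in particular the robot of $S$ nearest to $O$ --- within $r$ time units, and after that every hop that stays inside $S$ costs at most $\mathrm{diam}(S)\le r$. A wake-up subtree of depth $3$ rooted at $O$ whose $1+2+4=7$ non-root nodes are robots of $S$ then has every edge of length at most $r$, hence makespan at most $3r$, which is exactly the claim.

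For the decomposition I would use $6$ ``dominant-axis'' cells plus $8$ ``octant-core'' cells. For $i\in\{1,2,3\}$ and $\varepsilon\in\{+1,-1\}$ let $V_{\varepsilon,i}=\{p\in C:\sum_{j\ne i}|p_j|\le \varepsilon p_i\}$, the points whose signed $i$-th coordinate is at least the sum of the absolute values of the other two; and for a sign pattern $\sigma\in\{+,-\}^3$ let $M_\sigma$ be the part of the $\sigma$-octant where $\max_i|p_i|<\frac12\|p\|_1$. These $14$ sets cover $C$, because every point either has $\max_i|p_i|\ge\frac12\|p\|_1$ (so it lies in some $V_{\varepsilon,i}$) or not (so it lies in some $M_\sigma$). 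The heart of the argument is the diameter bound. For the cell $V=\{p\in C:|p_2|+|p_3|\le p_1\}$ (the case $\varepsilon=+1$, $i=1$; the others are symmetric), if $p,q\in V$ with $p_1\ge q_1$ then $\|p-q\|_1\le (p_1-q_1)+(|p_2|+|p_3|)+(|q_2|+|q_3|)\le (p_1-q_1)+(r-p_1)+q_1=r$, where the middle term uses $\|p\|_1\le r$ and the last uses the defining inequality of the cell. For $M_\sigma$ one separates the coordinates on which $p$ exceeds $q$ from those on which $q$ exceeds $p$: in the monotone cases the distance telescopes to $\bigl|\|p\|_1-\|q\|_1\bigr|\le r$, and in the mixed cases a short computation --- using $\|p\|_1,\|q\|_1\le r$ together with the factor-$\tfrac12$ condition to cancel the cross terms --- again gives $\|p-q\|_1\le r$. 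I expect this diameter verification, together with the mild bookkeeping for points on the coordinate hyperplanes (assign each to any incident cell), to be the only real work; note in particular that round cells such as $l_1$-balls of radius $r/2$ would need far more than $14$ pieces, so the cone-shaped cells are what make the pigeonhole go through.

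Given the decomposition, the wake-up tree is immediate. Let $p_1$ be the robot of $S$ closest to $O$ and move the center robot to it, at cost $\|O-p_1\|_1=\|p_1\|_1\le r$; now two agents sit at $p_1$, and they move to two further robots $p_2,p_3\in S$, each at cost at most $\mathrm{diam}(S)\le r$. Finally the four agents now located at $p_2$ and $p_3$ move to four more robots $p_4,p_5,p_6,p_7\in S$, again each at cost at most $r$. This realizes the wake-up tree $O\to p_1$, $p_1\to\{p_2,p_3\}$, $p_2\to\{p_4,p_6\}$, $p_3\to\{p_5,p_7\}$: the root has one child, every other internal node has two children, and every root-to-leaf path consists of three edges of length at most $r$, so the weighted depth is at most $3r$. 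Hence the seven sleeping robots $p_1,\dots,p_7$ --- which exist because $S$ contains at least seven robots --- are all awake by time $3r$.
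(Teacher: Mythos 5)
Your proposal is correct and follows essentially the same route as the paper: your six dominant-axis cells $V_{\varepsilon,i}$ and eight octant-core cells $M_\sigma$ are exactly the paper's six smaller cross-polytopes and eight pyramids of diameter $r$, and the pigeonhole plus depth-three doubling tree giving $r+2r=3r$ is identical. The only difference is cosmetic: you verify the diameter bounds by direct $l_1$ inequalities (leaving the pyramid case slightly sketched), whereas the paper gets them for free from the regions being scaled cross-polytopes and convex hulls of vertices at pairwise distance $r$.
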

\begin{proof}

To prove this lemma, we first divide the cross-polytope with diameter $d$ into six smaller cross-polytopes and eight pyramids (see Figure~\ref{6crosspolytopes}). Each smaller cross-polytope has a diameter of $d' = \frac{d}{2} = r$ and a radius of $r' = \frac{r}{2}$. Their centers are located at $(\frac{1}{2}, 0, 0)$, $(-\frac{1}{2}, 0, 0)$, $(0, \frac{1}{2}, 0)$, $(0, -\frac{1}{2}, 0)$, $(0, 0, \frac{1}{2})$, and $(0, 0, -\frac{1}{2})$. The pyramids also have a diameter $d'' = \frac{d}{2} = r$, and three of their faces are shared with adjacent cross-polytopes, making each pyramid unique. As an example, the vertices of the pyramid in Figure~\ref{pyramid} are $(0,0,0)$, $(\frac{1}{2}, \frac{1}{2}, 0)$, $(\frac{1}{2}, 0, \frac{1}{2})$, and $(0, \frac{1}{2}, \frac{1}{2})$. Since the pyramid is convex, its diameter equals the maximum distance between its vertices, $d'' = \frac{d}{2}$, which applies to all the pyramids.

With $n \geq 128$ sleeping robots and the cross-polytope divided into $14$ regions, at least one region will have at least $\left \lceil \frac{128}{14} \right \rceil = 10$ asleep robots. The initially awake robot at the center can wake one robot in this region in $r$ time units. These two robots can each wake another in $d' = \frac{d}{2}$ time units, doubling the number of awake robots every $d'$. Repeating this twice results in eight robots awake in $r + 2d' = 3r$ time units.
\end{proof}

\begin{figure}
\centering

\begin{subfigure}{0.4\textwidth}
\begin{tikzpicture}[scale=0.5]
    % Set up the viewing angle (elevation and azimuth in degrees)
    \tdplotsetrotatedcoords{0}{18}{0}
    \begin{scope}[shift = {(5, 0, 0)}]
    
    \begin{scope}[tdplot_rotated_coords]
    
    \drawRamielBack{0}{0}{0}{2}{cyan}
    \drawRamielFront{0}{0}{0}{2}{cyan}

    \drawRamielBack{5}{0}{0}{2}{cyan}
        \drawRamielBack{5}{1}{0}{1}{purple}
    \drawRamielFront{5}{1}{0}{1}{purple}				
    \drawRamielFront{5}{0}{0}{2}{cyan}				
    
    \drawRamielBack{10}{0}{0}{2}{cyan}
    
    \drawRamielBack{10}{1}{0}{1}{purple}
    \drawRamielFront{10}{1}{0}{1}{purple}
    
    \drawRamielBack{10}{-1}{0}{1}{blue}
    \drawRamielFront{10}{-1}{0}{1}{blue}
    
    \drawRamielBack{11}{0}{0}{1}{yellow}
    \drawRamielFront{11}{0}{0}{1}{yellow}
    
    \drawRamielBack{9}{0}{0}{1}{pink}
    \drawRamielFront{9}{0}{0}{1}{pink}
    
    \drawRamielBack{10}{0}{-1}{1}{red}
    \drawRamielFront{10}{0}{-1}{1}{red}
                    
    \drawRamielFront{10}{0}{0}{2}{cyan}

    \draw[<->] (-2.2,0,0.2) -- (-0.2,2,0.2) node[pos=0.65, left=0.3cm] {d};
    \draw[<->] (3.8,1,0.2) -- (4.8,2,0.2) node[pos=0.8, left=0.3cm] {$\frac{d}{2}$};

			\end{scope}
			\end{scope}
				
			\end{tikzpicture}
	\caption{Six smaller cross-polytopes with diameter $\frac{d}{2}$ inside a cross-polytope of diameter $d$.}
 \label{6crosspolytopes}
\end{subfigure}

\begin{subfigure}{0.4\textwidth}
\centering
\begin{tikzpicture}[scale=0.6]
            % Set up the viewing angle (elevation and azimuth in degrees)
            \tdplotsetrotatedcoords{0}{18}{0}
            
            \begin{scope}[tdplot_rotated_coords]
                
                \drawRamielBack{0}{0}{0}{2}{cyan}
                
                \drawRamielBack{0}{1}{0}{1}{pink}
                \drawRamielFront{0}{1}{0}{1}{pink}
                \drawRamielDownFront{0}{1}{0}{1}{red}
                
                \drawRamielBack{0}{-1}{0}{1}{pink}
                \drawRamielFront{0}{-1}{0}{1}{pink}
                
                \drawRamielBack{1}{0}{0}{1}{pink}
                \drawRamielFront{1}{0}{0}{1}{pink}
                \drawRamielUpperLeft{1}{0}{0}{1}{red}
                
                \drawRamielBack{-1}{0}{0}{1}{pink}
                \drawRamielFront{-1}{0}{0}{1}{pink}
                
                \drawRamielBack{0}{0}{-1}{1}{pink}
                \drawRamielFront{0}{0}{-1}{1}{pink}
                \drawRamielUpperRight{0}{0}{-1}{1}{red}
                
                \drawPyramidHelp{1/2-0.039}{1/2-0.039}{1/2-0.039}{1}{red}
                
                \drawRamielFront{0}{0}{0}{2}{cyan}

		\end{scope}

			\end{tikzpicture}
\caption{Eight pyramids with diameter $\frac{d}{2}$ inside a cross-polytope with diameter $d$.}
\label{pyramid}
 \end{subfigure}
	\caption{
A cross-polytope partitioned into eight pyramids and six cross-polytopes with diameter $d' = \frac{d}{2}$.}
\Description{The figure shows how a cross-polytope with diameter d can be partiotioned intro six smaller cross-polytopes and eight smaller pyramids, each shape has diameter equal to d/2.}
\end{figure}
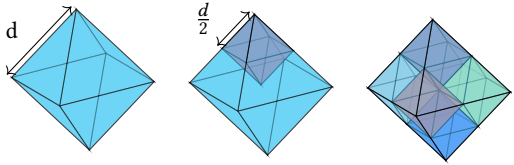
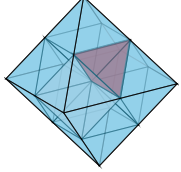

\begin{lemma}
A cross-polytope with diameter $d$ can be fully covered by six smaller cross-polytopes, each with a diameter of $\frac{2d}{3}$.
    \label{sixcoverlemma}
\end{lemma}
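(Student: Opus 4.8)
The plan is to write down six explicit small cross-polytopes and check the covering by a short computation. First I would normalize the picture: a cross-polytope of diameter $d$ is a scaled copy of the unit $l_1$-ball $B=\{x\in\mathbb{R}^3:\|x\|_1\le 1\}$, which has diameter $2$. So it suffices to cover $B$ by six translates of $\frac{2}{3}B$, since each translate of $\frac{2}{3}B$ has diameter $2\cdot\frac{2}{3}=\frac{4}{3}=\frac{2d}{3}$ back in the original scale.

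For the six centers I would take $v_1^{\pm}=(\pm\frac{1}{3},0,0)$, $v_2^{\pm}=(0,\pm\frac{1}{3},0)$, $v_3^{\pm}=(0,0,\pm\frac{1}{3})$ -- one nudged toward each of the six vertices $\pm e_i$ of $B$ -- and set $C_i^{\pm}=v_i^{\pm}+\frac{2}{3}B$. To show $B\subseteq\bigcup_{i,\pm}C_i^{\pm}$, take any $p=(x,y,z)\in B$. Both $B$ and the set of centers are invariant under all signed coordinate permutations, so after applying one such symmetry I may assume $x\ge|y|$, $x\ge|z|$, and $x\ge 0$; it then suffices to prove $p\in v_1^{+}+\frac{2}{3}B$, i.e.\ $|x-\frac{1}{3}|+|y|+|z|\le\frac{2}{3}$.

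This reduces to two one-line cases. If $x\ge\frac{1}{3}$ then $|x-\frac{1}{3}|+|y|+|z|=x+|y|+|z|-\frac{1}{3}=\|p\|_1-\frac{1}{3}\le\frac{2}{3}$. If $x<\frac{1}{3}$ then $|y|+|z|\le 2x<x+\frac{1}{3}$, so $|x-\frac{1}{3}|+|y|+|z|=\frac{1}{3}-x+|y|+|z|<\frac{2}{3}$. In either case $p$ lies in one of the six cross-polytopes, which proves the lemma.

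I do not expect a genuine obstacle here: the only real decision is where to place the centers. Writing a candidate center on the first axis as $(t,0,0)$ and the common radius of the small cross-polytopes as $\rho$, the value $t=\frac{1}{3}$ is forced by balancing the two extreme situations -- the vertex $e_1$, which requires $1-t\le\rho$, and the face-center $(\frac{1}{3},\frac{1}{3},\frac{1}{3})$, which for $t\ge\frac{1}{3}$ requires $t+\frac{1}{3}\le\rho$ -- and these meet exactly at $t=\frac{1}{3}$, $\rho=\frac{2}{3}$. After that the verification is routine, so the main work is just presenting the symmetry reduction cleanly enough that only one representative point and one center need to be checked.
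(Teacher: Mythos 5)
Your proof is correct and is essentially identical to the paper's: the same six centers at distance one third of the radius along each coordinate axis, the same symmetry reduction to the case $x \ge |y|, |z|$ with $x \ge 0$, and the same two-case computation of the $l_1$ distance to the nearest center. The only cosmetic differences are your normalization to the unit ball and the added remark explaining why $t=\tfrac{1}{3}$ is forced, neither of which changes the argument.
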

\begin{proof}
     
Consider a cross-polytope with diameter $d$ and radius $r = \frac{d}{2}$, centered at the origin. Now, take six smaller cross-polytopes, each with diameter $d' = \frac{2d}{3}$ and radius $r' = \frac{2r}{3}$, centered at $(\frac{r}{3}, 0, 0)$, $(-\frac{r}{3}, 0, 0)$, $(0, \frac{r}{3}, 0)$, $(0, -\frac{r}{3}, 0)$, $(0, 0, \frac{r}{3})$, and $(0, 0, -\frac{r}{3})$. See Figure \ref{sixcover}. 

Let $p = (x, y, z)$ be any point inside the larger cross-polytope, where $|x| + |y| + |z| \leq r$. We need to show that the distance from $p$ to the center of one of the smaller cross-polytopes is at most $r' = \frac{2r}{3}$ in $l_1$ norm. 

Without loss of generality, assume $|x| \geq |y|$ and $|x| \geq |z|$. We claim that if $x \geq 0$, $p$ lies within the cross-polytope centered at $c_1 = \left( \frac{r}{3}, 0, 0 \right)$, and if $x < 0$, it lies within the one centered at $c_2 = \left( -\frac{r}{3}, 0, 0 \right)$.
Assume $x \geq 0$. If $x \leq \frac{r}{3}$, then $|y| \leq |x|\leq \frac{r}{3}$ and $|z| \leq |x|\leq \frac{r}{3}$, and the distance is:
$\|p-c_1\|_1= \left| \frac{r}{3} - x \right| + |y| + |z|\leq \frac{r}{3} - x + 2|x| \leq  \frac{2r}{3}$.
If $x > \frac{r}{3}$, $\|p-c_1\|_1= x - \frac{r}{3} + |y| + |z|$.
Since $|y| + |z| \leq r - x$, it follows that $\|p-c_1\|_1 \leq x - \frac{r}{3} + r - x = \frac{2r}{3}$.

Similarly, if $x < 0$, we can show $\|p - c_2\|_1 \leq \frac{2r}{3}$.
Thus, any point $p$ inside the larger cross-polytope is contained in at least one of the six smaller cross-polytopes.
\end{proof}

	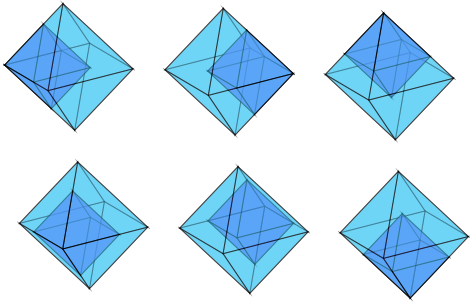
\begin{figure}
	    \centering
	   
		\begin{tikzpicture}[scale=0.45]
				% Set up the viewing angle (elevation and azimuth in degrees)
				\tdplotsetrotatedcoords{0}{18}{0}
				\centering
				\begin{scope}[tdplot_rotated_coords]
					\drawRamielBack{0}{0}{0}{2}{cyan}
					\drawRamielBack{-2/3}{0}{0}{4/3}{blue}
					\drawRamielFront{-2/3}{0}{0}{4/3}{blue}							
					\drawRamielFront{0}{0}{0}{2}{cyan}
					
					\drawRamielBack{5+0}{0}{0}{2}{cyan}
					\drawRamielBack{5+2/3}{0}{0}{4/3}{blue}
					\drawRamielFront{5+2/3}{0}{0}{4/3}{blue}				
					\drawRamielFront{5}{0}{0}{2}{cyan}

					\drawRamielBack{10+0}{0}{0}{2}{cyan}
					\drawRamielBack{10}{2/3}{0}{4/3}{blue}
					\drawRamielFront{10}{2/3}{0}{4/3}{blue}				
					\drawRamielFront{10}{0}{0}{2}{cyan}					

					\drawRamielBack{10+0}{-5}{0}{2}{cyan}
					\drawRamielBack{10}{-5-2/3}{0}{4/3}{blue}
					\drawRamielFront{10}{-5-2/3}{0}{4/3}{blue}				
					\drawRamielFront{10}{-5}{0}{2}{cyan}			

					\drawRamielBack{5+0}{-5}{0}{2}{cyan}
					\drawRamielBack{5}{-5}{2/3}{4/3}{blue}
					\drawRamielFront{5}{-5}{2/3}{4/3}{blue}				
					\drawRamielFront{5}{-5}{0}{2}{cyan}			
					
					\drawRamielBack{0}{-5}{0}{2}{cyan}
					\drawRamielBack{0}{-5}{-2/3}{4/3}{blue}
					\drawRamielFront{0}{-5}{-2/3}{4/3}{blue}				
					\drawRamielFront{0}{-5}{0}{2}{cyan}		
					
				\end{scope}

			\end{tikzpicture}

 \caption{The cross-polytope with diameter $d$ is covered by six overlapping cross-polytopes, each with a diameter of $\frac{2d}{3}$, fully covering all points of the larger cross-polytope.}
 \Description{A cross-polytope with diameter d is shown in the image, the shape is covered by six smaller crasspolytopes, these crosspolytopes overlap but cover the entire shape. The diameter of each small cross-polytope is equal to 2d/3.}
  \label{sixcover}
     \end{figure}

\subsection{Our strategy for FTP in \texorpdfstring{$(\mathbb{R}^3,l_1)$}{R3, l1}}

Given a set of $n$ asleep robots, our wake-up strategy is as follows:
If $n \leq 127$, the awake robot at the origin can activate all robots within $13r$ time units, as shown in Lemma~\ref{less128}.
If $n > 127$, by Lemma~\ref{sevenrobots}, we first wake up 7 robots in $3r$ time units. Then, we divide the cross-polytope into six smaller overlapping cross-polytopes, each with a diameter of $\frac{2d}{3}$, as described in Lemma~\ref{sixcoverlemma}.
With eight robots awake, we assign six of them to one of the smaller cross-polytopes. Each robot reaches the center of its assigned cross-polytope in at most $\frac{2d}{3}$ time units. Then, each robot wakes up the remaining robots in its section, taking at most $f\left(\frac{2d}{3}\right)$ time units, where $f(d)$ is the time needed to wake all robots in a region of diameter $d$. If a robot belongs to multiple cross-polytopes, it can be assigned to any one of them.
If a cross-polytope has fewer than $128$ asleep robots, they can be woken up in $\frac{13d}{4}$ time units as per Lemma~\ref{less128}.
Thus, the recursive formula for computing $f(d)$ is:
 
	\begin{eqnarray*}
		f(d) \leq \frac{3d}{2} + \frac{2d}{3} + f\left(\frac{2d}{3}\right) = \frac{13d}{6} + f\left(\frac{2d}{3}\right)
	\end{eqnarray*}
	
	Which yields: $
		f(d) \leq \frac{13d}{6} \times \left(1 + \frac{2}{3} + \frac{4}{9} + \ldots \right) = \frac{13d}{2}.$
	For the unit $l_1$-ball where $d = 2$, the upper bound for the total time is $13$ units.

%%%%%%%%%%%%%%%%%%%%%%%%%%%%%%%%%%%%%%%%%%%%%%%%%%%%%%%%%%%%%%%%%%%%%%%%

\section{Wake-up ratio for robots on the boundary of the unit \texorpdfstring{$l_2$}{l2}-ball in \texorpdfstring{$\mathbb{R}^3$}{R3}}

In this section, we study FTP in $(\mathbb{R}^3, l_2)$ for a special case where the asleep robots are on the boundary of the unit $l_2$-ball.

We believe this version of FTP is significant, as we hypothesize the maximal makespan occurs when all asleep robots are on the boundary. This hypothesis is based on our observation that, in numerous random instances for various values of $n$ in $(\mathbb{R}^3,l_1)$ and $(\mathbb{R}^3,l_2)$, the maximum makespan was achieved when the asleep robots were on the boundary.

This scenario is also relevant to real-world applications like communication and transportation, where, similar to the unit ball in $(\mathbb{R}^3, l_2)$, key locations on Earth are often on its surface. This leads us to a variant called surface-FTP, where all robots, including the initially active one, are located on the surface of the $l_2$-ball in $\mathbb{R}^3$, and distances are measured using the geodesic (shortest arc) distance.

Our approach for these FTP versions is to project the points on the boundary of the unit $l_2$-ball in $\mathbb{R}^3$ onto an $l_2$-disk with radius $\frac{\pi}{2}$, ensuring the Euclidean distance between projected points is less than their original Euclidean distances on the boundary of $l_2$-ball.

\subsection{Mapping}
We begin by cutting the unit $l_2$-ball in $\mathbb{R}^3$ into two hemispheres and mapping one hemisphere to a disk. For this disk, we generate a wake-up tree using the method from Section \ref{proofl2r2}. The wake-up tree is then extended to coordinate the wake-up process for the robots on the boundary.
We explain our mapping in the following.

Consider the upper hemisphere of a unit $l_2$-ball with $T = (0, 0, 1)$. Any point $p = (x, y, z)$ on the surface of the hemisphere can be uniquely represented by the pair $(\delta, \theta)$, where $\delta = \arccos(z)$ is the geodesic distance from $T$ to $p$ (the shortest path along the hemisphere), and $\theta$ is the angle of $p$'s projection onto the $xy$-plane relative to the positive $x$-axis. This is illustrated in Figure \ref{fig:hemisphre1}.

\textbf{Mapping $\mathcal{M}$:} We define a mapping $\mathcal{M}$ that maps each point $p$ with pair $(\delta, \theta)$ on the hemisphere to a point $p' = (\delta, \theta)$ in a disk 
 with radius $\frac{\pi}{2}$ using polar coordinates. In this mapping, $T = (0, 0, 1)$ is mapped to $(0, 0)$ in the $l_2$-disk, with $\theta \in [0, 2\pi)$ and $\delta \in [0, \frac{\pi}{2}]$. The boundary of the hemisphere is mapped to a disk of radius $\frac{\pi}{2}$, as shown in Figure \ref{fig:hemisphre1}.

We now prove that for any two points $p_1$ and $p_2$ on the hemisphere, the Euclidean distance between them is less than or equal to the Euclidean distance between their mapped points $p'_1 = \mathcal{M}(p_1)$ and $p'_2 = \mathcal{M}(p_2)$. This is formally stated in the following lemma.
\begin{figure}
    \centering
    \tdplotsetmaincoords{60}{110}
    \pgfmathsetmacro{\radius}{1}
    \pgfmathsetmacro{\thetavec}{0}
    \pgfmathsetmacro{\phivec}{60}
    
    \begin{tikzpicture}
    %draw the main coordinate system axes
    \begin{scope}[tdplot_main_coords, scale=1.7]{
    \draw[thick,->] (0,0,0) -- (0,1.4,0) node[anchor=west]{$x$};
    \draw[thick,->] (0,0,0) -- (1.4,0,0) node[anchor=north east]{$y$};
    \draw[thick,->] (0,0,0) -- (0,0,1.4) node[anchor=south]{$z$};
    
    \tdplotsetthetaplanecoords{\phivec}
    
    %\draw[dashed] (0,0,0) arc (45:315:0.5);
    \node at (-0.19, -0.45, 0.05) []{$\theta$};
    \draw[dashed,tdplot_rotated_coords, line width=0.3mm] (\radius,0,0) arc (0:90:\radius) node[midway, right, pos=0.3] {$\delta$};
    \draw[dashed,tdplot_rotated_coords, line width=0.3mm,color=red] (\radius,0,0) arc (0:60:\radius);
    
    \draw[dashed] (\radius,0,0) arc (0:360:\radius);
    \shade[ball color=blue!10!white,opacity=0.2] (1cm,0) arc (0:-180:1cm and 5mm) arc (180:0:1cm and 1cm);
    % (-z x y)
    \draw (0, 0, 1) node [scale=2,circle, fill=red, inner sep=.02cm] () {};
    \node at (0.3, 0, 1.2) []{$T$};
    
    \draw (0.433, 0.75, 0.5) node [scale=2,circle, fill=red, inner sep=.02cm] () {};
    \node at (0.333, 0.77, 0.6) []{$p$};
    
    %\draw (0.433, 0.75, 0) node [scale=2,circle, fill=green, inner sep=.02cm] () {};

    \draw[color=red] (0.4,0,0) arc (0:60:0.4);
    \draw[color=red] (0.4,0,0) arc (0:-270:0.4);
    
    %\draw[dotted] (0.433, 0.75, 0.5) -- (0.433, 0.75, 0);
    
    \draw[dashed] (0,0,0) -- (0.5, 0.866, 0);
    
    %\draw[dashed] (0.18,0.3,0) -- (0,0.4,0);
    %\node at (0.19, 0.45, 0.05) []{$\theta$};
    }\end{scope}

    \begin{scope}[scale=0.45]{

        \pgfmathsetmacro{\dy}{1}
        \pgfmathsetmacro{\dx}{10}
        
        % Draw the disk
        \filldraw[fill=blue!10] (\dx,\dy) circle (1.57cm*2);
        
        % Draw radius r
        \draw[thick] (\dx,\dy) -- (\dx+0.907*2, \dy-0.524*2) node[pos=0.45, right=0.2cm, below=0.05cm] {$\delta$};
        
        % Mark the point at (r, theta)
        \filldraw (\dx+0.907*2, \dy-0.524*2) circle (2pt) node[anchor=west] {$p'$};
        
        % Draw the angle theta
        \draw[thick] (\dx,\dy) -- (\dx+1.57*2,\dy) node[midway,above, pos=0.6] {$r=\frac{\pi}{2}$};
        \draw[thick, color=red] (\dx+1,\dy) arc [start angle=0, end angle=330, radius=1cm];
        \draw (\dx-1.3, \dy+1) node {${\theta}$};
        
        % Draw coordinate axes for reference
        \draw[thick, dashed] (\dx-4,\dy) -- (\dx+4,\dy) node[anchor=west] {$x$};
        \draw[thick, dashed] (\dx,\dy-4) -- (\dx,\dy+4) node[anchor=south] {$y$};

        }\end{scope};
    \end{tikzpicture}

    \caption{Every point on the surface of a unit hemisphere can be represented by $\delta$ and $\theta$. The point $p = (0.433, 0.75, 0.5)$ is represented by $\theta = -30^\circ = 330^\circ$ and $\delta = 1.047$. The mapping of $p$ onto the unit $l_2$-disk results in $p'$ with coordinates $(\delta, \theta)$ in polar coordinates.}
    \label{fig:hemisphre1}
    \Description{The method of mapping a hemisphere onto a circle. Each point has some arc connecting it to the top of the hemisphere, the length of this arc is their geodesic distance and would be equal to the distance of the mapped point to the center of the mapped circle. The angle of the projection of this point on the xy plane with the x axis is directly mapped to the angle of the mapped point with the x axis, preserving angular position.}
\end{figure}
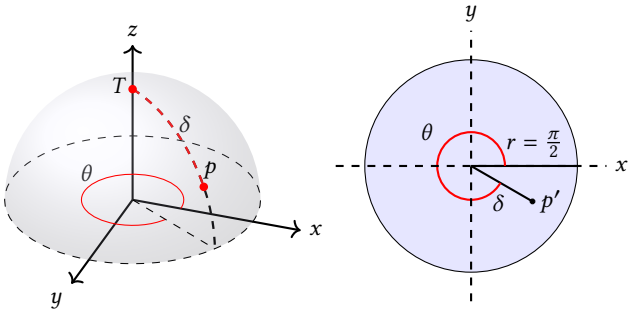

\begin{figure}
	\centering
	\begin{subfigure}{0.5\textwidth}
		\tdplotsetmaincoords{60}{110}
		\pgfmathsetmacro{\radius}{1}
		\pgfmathsetmacro{\thetavec}{0}
		\pgfmathsetmacro{\phivec}{60}

		\begin{tikzpicture}[scale=2.5,tdplot_main_coords]
			%draw the main coordinate system axes
			\draw[thick,->] (0,0,0) -- (0,1.4,0) node[anchor=north west]{$x$};
			\draw[thick,->] (0,0,0) -- (1.4,0,0) node[anchor=north east]{$y$};
			\draw[thick,->] (0,0,0) -- (0,0,1.4) node[anchor=south]{$z$};
			    
			\tdplotsetthetaplanecoords{\phivec}

			\newcommand{\drawLineToPoint}[4]{
				\pgfmathsetmacro{\angle}{#2}
				\pgfmathsetmacro{\x}{#1 * cos(\angle)}
				\pgfmathsetmacro{\y}{#1 * sin(\angle)}
				\pgfmathsetmacro{\d}{#1 * acos(#3)}
				% Calculate the position at the end of the arc
				\pgfmathsetmacro{\xdot}{#1 * cos(\angle) * sin(\d)}  % x-coordinate of the point
				\pgfmathsetmacro{\ydot}{#1 * sin(\angle) * sin(\d)}  % y-coordinate of the point
				\pgfmathsetmacro{\zdot}{#1 * cos(\d)}  % z-coordinate of the point
				% Draw a red dot at the end of the arc
				\draw[thick, color=#4] (0,0,0) -- (\xdot,\ydot,\zdot) node[]{};
			}
			\drawLineToPoint{1}{40}{0.8}{red}
			\drawLineToPoint{1}{40}{0.5}{red}
			\drawLineToPoint{1}{40}{0}{purple}

			% Draw the hemisphere
			\draw[dashed] (\radius,0,0) arc (0:360:\radius);
			\shade[ball color=blue!10!white,opacity=0.7] (1cm,0) arc (0:-180:1cm and 5mm) arc (180:0:1cm and 1cm);
			
			\draw[color=black] (0.4,0,0) arc (0:40:0.4);
			\draw[color=black] (0.4,0,0) arc (0:-270:0.4) node[midway,shift={(-0.2,-0.5,0)}] {$\theta$};
			% Draw point T at (0,0,1)
			\draw (0, 0, 1) node [scale=2,circle, fill=black, inner sep=.02cm] () {};
			    
			% Function to draw a circle around the sphere at height z
			\def\drawcircleatz#1{
				\pgfmathsetmacro{\circleradius}{sqrt(1-#1*#1)}
				\draw[dashed] (\circleradius,0,#1) arc (0:360:\circleradius);
			}
			
			\newcommand{\drawHemispherePoint}[6]{
				% #1: radius of the hemisphere
				% #2: angle theta in degrees (angle in the xy-plane from the x-axis)
				% #3: height (z-coordinate)
				        
				% Convert angle to radians for trigonometric calculations
				\pgfmathsetmacro{\theta}{#2}
				\pgfmathsetmacro{\x}{#1 * cos(\theta)}
				\pgfmathsetmacro{\y}{#1 * sin(\theta)}
				\pgfmathsetmacro{\d}{#1 * acos(#3)}
				        
				% Draw angular distance d
				\tdplotsetthetaplanecoords{#2}
				%\draw[dashed,tdplot_rotated_coords, line width=0.3mm, color=#4] (#1,0,0) arc (0:90:#1);
				
				% Calculate the position at the end of the arc
				\pgfmathsetmacro{\xdot}{#1 * cos(\theta) * sin(\d)}  % x-coordinate of the point
				\pgfmathsetmacro{\ydot}{#1 * sin(\theta) * sin(\d)}  % y-coordinate of the point
				\pgfmathsetmacro{\zdot}{#1 * cos(\d)}  % z-coordinate of the point
				% Draw a red dot at the end of the arc
				\draw (\xdot,\ydot,\zdot) node [circle, fill=#6, scale=2, inner sep=0.02cm, label=left:#5] {};
			}
			\newcommand{\drawArc}[8]{
				% #1: radius of the hemispheredelta
				% #2: angle theta in degrees (angle in the xy-plane from the x-axis)
				% #3: height (z-coordinate)
				        
				% Convert angle to radians for trigonometric calculations
				\pgfmathsetmacro{\theta}{#2}
				\pgfmathsetmacro{\x}{#1 * cos(\theta)}
				\pgfmathsetmacro{\y}{#1 * sin(\theta)}
				\pgfmathsetmacro{\d}{#1 * acos(#3)}
				\pgfmathsetmacro{\dd}{#1 * acos(#4)}
				        
				% Draw angular distance d
				\tdplotsetthetaplanecoords{#2}
				\draw[#8,tdplot_rotated_coords, line width=0.3mm, color=#5, ] (#1,0,0) arc (0:\dd:#1) node[midway, right, pos=#7]{#6};
			}
			    
			\drawArc{1}{40}{0}{0}{black}{}{0}{dashed}
			\drawArc{1}{40}{0}{0.5}{red}{$\delta_2$}{0.65}{}
			\drawArc{1}{40}{0}{0.8}{blue}{$\delta_1$}{0.3}{dashed,thick}
			
			\drawHemispherePoint{1}{40}{0.5}{black}{$p_2$}{red}
			\drawHemispherePoint{1}{40}{0.8}{black}{$p_1$}{blue}
			
			\node at (0.1,0,1.1) [] {$T$};
			
		\end{tikzpicture}
		\caption{In this case, $\theta_1 = \theta_2 = \theta$, and $\delta_1$ and $\delta_2$ represent the geodesic distances of $p_1$ and $p_2$ from the point $T = (0,0,1)$.}
		\label{fig:hemisphere_same_theta}
	\end{subfigure}

	\begin{subfigure}{0.5\textwidth}
		\tdplotsetmaincoords{60}{110}
		\pgfmathsetmacro{\radius}{1}
		\pgfmathsetmacro{\thetavec}{0}
		\pgfmathsetmacro{\phivec}{60}

		\begin{tikzpicture}[scale=2.5,tdplot_main_coords]
			%draw the main coordinate system axes
			\draw[thick,->] (0,0,0) -- (0,1.4,0) node[anchor=north west]{$x$};
			\draw[thick,->] (0,0,0) -- (1.4,0,0) node[anchor=north east]{$y$};
			\draw[thick,->] (0,0,0) -- (0,0,1.4) node[anchor=south]{$z$};
			    
			\tdplotsetthetaplanecoords{\phivec}

			\newcommand{\drawLineToPoint}[4]{
				\pgfmathsetmacro{\thetaa}{#2}
				\pgfmathsetmacro{\x}{#1 * cos(\thetaa)}
				\pgfmathsetmacro{\y}{#1 * sin(\thetaa)}
				\pgfmathsetmacro{\d}{#1 * acos(#3)}
				% Calculate the position at the end of the arc
				\pgfmathsetmacro{\xdot}{#1 * cos(\thetaa) * sin(\d)}  % x-coordinate of the point
				\pgfmathsetmacro{\ydot}{#1 * sin(\thetaa) * sin(\d)}  % y-coordinate of the point
				\pgfmathsetmacro{\zdot}{#1 * cos(\d)}  % z-coordinate of the point
				% Draw a red dot at the end of the arc
				\draw[thick, color=#4] (0,0,0) -- (\xdot,\ydot,\zdot) node[]{};
			}
			\drawLineToPoint{1}{30}{0}{purple}
			\drawLineToPoint{1}{80}{0}{purple}
			    
			% Draw the hemisphere
			\draw[dashed] (\radius,0,0) arc (0:360:\radius);
			\shade[ball color=blue!10!white,opacity=0.7] (1cm,0) arc (0:-180:1cm and 5mm) arc (180:0:1cm and 1cm);

			\draw[] (0.6,0.31,0.05) arc (30:80:0.6) node[below,midway,shift={(0,0.2,0)}] {$\theta_2-\theta_1$};
			    
			% Draw point T at (0,0,1)
			\draw (0, 0, 1) node [scale=2,circle, fill=black, inner sep=.02cm] () {};
			    
			% Function to draw a circle around the sphere at height z
			\def\drawcircleatz#1{
				\pgfmathsetmacro{\circleradius}{sqrt(1-#1*#1)}
				\draw[dashed] (\circleradius,0,#1) arc (0:360:\circleradius);
			}

			\newcommand{\drawHemispherePoint}[6]{
				% #1: radius of the hemisphere
				% #2: angle theta in degrees (angle in the xy-plane from the x-axis)
				% #3: height (z-coordinate)
				        
				% Convert angle to radians for trigonometric calculations
				\pgfmathsetmacro{\theta}{#2}
				\pgfmathsetmacro{\x}{#1 * cos(\theta)}
				\pgfmathsetmacro{\y}{#1 * sin(\theta)}
				\pgfmathsetmacro{\d}{#1 * acos(#3)}
				        
				% Draw angular distance d
				\tdplotsetthetaplanecoords{#2}
				\draw[dashed,tdplot_rotated_coords, line width=0.3mm] (#1,0,0) arc (0:90:#1); 
				\draw[color=#4,tdplot_rotated_coords, line width=0.3mm] (#1,0,0) arc (0:\d:#1) node [midway, #5]{$\delta$};

				% Calculate the position at the end of the arc
				\pgfmathsetmacro{\xdot}{#1 * cos(\theta) * sin(\d)}  % x-coordinate of the point
				\pgfmathsetmacro{\ydot}{#1 * sin(\theta) * sin(\d)}  % y-coordinate of the point
				\pgfmathsetmacro{\zdot}{#1 * cos(\d)}  % z-coordinate of the point
				% Draw a red dot at the end of the arc
				\draw (\xdot,\ydot,\zdot) node [circle, fill=#4, scale=2, inner sep=0.02cm] {} node [#5, shift={(0,0,-0.2)}] {$p_#6$};
			}
			\drawHemispherePoint{1}{30}{0.7}{blue}{left}{1}
			\drawHemispherePoint{1}{80}{0.7}{red}{right}{2}
			\node at (0.1,0,1.1) [] {$T$};
			\drawcircleatz{0.7}

		\end{tikzpicture}
		\caption{Case 2: In this case, $\delta_1 = \delta_2 = \delta$, where $\delta$ is the geodesic distance of both $p_1$ and $p_2$ from $T = (0,0,1)$.}
		\label{fig:hemispheresamedelta}
	\end{subfigure}

 	\begin{subfigure}{0.5\textwidth}
		\tdplotsetmaincoords{60}{110}
		\pgfmathsetmacro{\radius}{1}
		\pgfmathsetmacro{\thetavec}{0}
		\pgfmathsetmacro{\phivec}{60}

		\begin{tikzpicture}[scale=2.5,tdplot_main_coords]
			%draw the main coordinate system axes
			\draw[thick,->] (0,0,0) -- (0,1.4,0) node[anchor=north west]{$x$};
			\draw[thick,->] (0,0,0) -- (1.4,0,0) node[anchor=north east]{$y$};
			\draw[thick,->] (0,0,0) -- (0,0,1.4) node[anchor=south]{$z$};
			    
			\tdplotsetthetaplanecoords{\phivec}
			
			\newcommand{\drawTrapzoid}[6]{
				\pgfmathsetmacro{\theta}{#2}
				\pgfmathsetmacro{\thetaa}{#3}
				\pgfmathsetmacro{\d}{#1 * acos(#4)}
				\pgfmathsetmacro{\dd}{#1 * acos(#5)}
				        
				% Calculate the position at the end of the arc
				\pgfmathsetmacro{\xdot}{#1 * cos(\theta) * sin(\d)}  % x-coordinate of the point
				\pgfmathsetmacro{\ydot}{#1 * sin(\theta) * sin(\d)}  % y-coordinate of the point
				\pgfmathsetmacro{\zdot}{#1 * cos(\d)}  % z-coordinate of the point
				\pgfmathsetmacro{\xdott}{#1 * cos(\thetaa) * sin(\dd)}  % x-coordinate of the point
				\pgfmathsetmacro{\ydott}{#1 * sin(\thetaa) * sin(\dd)}  % y-coordinate of the point
				\pgfmathsetmacro{\zdott}{#1 * cos(\dd)}  % z-coordinate of the point
				\pgfmathsetmacro{\xdoot}{#1 * cos(\theta) * sin(\dd)}  % x-coordinate of the point
				\pgfmathsetmacro{\ydoot}{#1 * sin(\theta) * sin(\dd)}  % y-coordinate of the point
				\pgfmathsetmacro{\xddot}{#1 * cos(\thetaa) * sin(\d)}  % x-coordinate of the point
				\pgfmathsetmacro{\yddot}{#1 * sin(\thetaa) * sin(\d)}  % y-coordinate of the point
				% Draw a red dot at the end of the arc

				\draw[thick, color=#6]  (\xdot,\ydot,\zdot) -- (\xddot,\yddot,\zdot) node[]{};
				\draw[thick, color=#6]  (\xdoot,\ydoot,\zdott) -- (\xdot,\ydot,\zdot) node[]{};
				\draw[thick, color=#6]  (\xddot,\yddot,\zdot) -- (\xdott,\ydott,\zdott) node[]{};
				\draw[thick, color=#6]  (\xdoot,\ydoot,\zdott) -- (\xdott,\ydott,\zdott) node[]{};
				        
			}
			
			\newcommand{\drawLineToPoint}[4]{
				\pgfmathsetmacro{\theta}{#2}
				\pgfmathsetmacro{\x}{#1 * cos(\theta)}
				\pgfmathsetmacro{\y}{#1 * sin(\theta)}
				\pgfmathsetmacro{\d}{#1 * acos(#3)}
				% Calculate the position at the end of the arc
				\pgfmathsetmacro{\xdot}{#1 * cos(\theta) * sin(\d)}  % x-coordinate of the point
				\pgfmathsetmacro{\ydot}{#1 * sin(\theta) * sin(\d)}  % y-coordinate of the point
				\pgfmathsetmacro{\zdot}{#1 * cos(\d)}  % z-coordinate of the point  
				% Draw a red dot at the end of the arc
				\draw[thick, color=#4] (0,0,0) -- (\xdot,\ydot,\zdot) node[]{};
			}
			\drawLineToPoint{1}{30}{0}{purple}    
			\drawLineToPoint{1}{80}{0}{purple}

			% Draw the hemisphere
			\draw[dashed] (\radius,0,0) arc (0:360:\radius);
			\shade[ball color=blue!10!white,opacity=0.7] (1cm,0) arc (0:-180:1cm and 5mm) arc (180:0:1cm and 1cm);
			
			% Draw point T at (0,0,1)
			\draw (0, 0, 1) node [scale=2,circle, fill=black, inner sep=.02cm] () {};
			
			\drawTrapzoid{1}{30}{80}{0.7}{0.3}{blue}
			    
			% Function to draw a circle around the sphere at height z
			\def\drawcircleatz#1{
				\pgfmathsetmacro{\circleradius}{sqrt(1-#1*#1)}
				\draw[dashed] (\circleradius,0,#1) arc (0:360:\circleradius);
			}
			
			\newcommand{\drawHemispherePoint}[6]{
				% #1: radius of the hemisphere
				% #2: angle theta in degrees (angle in the xy-plane from the x-axis)
				% #3: height (z-coordinate)
				        
				% Convert angle to radians for trigonometric calculations
				\pgfmathsetmacro{\theta}{#2}
				\pgfmathsetmacro{\x}{#1 * cos(\theta)}
				\pgfmathsetmacro{\y}{#1 * sin(\theta)}
				\pgfmathsetmacro{\d}{#1 * acos(#3)}
				        
				% Draw angular distance d
				\tdplotsetthetaplanecoords{#2}
				%\draw[dashed,tdplot_rotated_coords, line width=0.3mm] (#1,0,0) arc (0:\d:#1);
				\draw[dashed,tdplot_rotated_coords, line width=0.3mm] (#1,0,0) arc (0:90:#1);
				
				% Calculate the position at the end of the arc
				\pgfmathsetmacro{\xdot}{#1 * cos(\theta) * sin(\d)}  % x-coordinate of the point
				\pgfmathsetmacro{\ydot}{#1 * sin(\theta) * sin(\d)}  % y-coordinate of the point
				\pgfmathsetmacro{\zdot}{#1 * cos(\d)}  % z-coordinate of the point
				% Draw a red dot at the end of the arc
				\draw (\xdot,\ydot,\zdot) node [circle, fill=#4, scale=2, inner sep=0.02cm] {} node [shift={#6}]{#5};
			}

			\drawcircleatz{0.7}
			\drawcircleatz{0.3}
			
			\drawHemispherePoint{1}{30}{0.7}{blue}{$p_1$}{(-0.1,-0.3,0.1)}
			\drawHemispherePoint{1}{80}{0.7}{red}{$q_1$}{(0.1,0.25,0.4)}
			\drawHemispherePoint{1}{30}{0.3}{blue}{$q_2$}{(0,-0.2,-0.2)}
			\drawHemispherePoint{1}{80}{0.3}{red}{$p_2$}{(0,0.2,-0.2)}
			\node at (0.1,0,1.1) [] {$T$};
		\end{tikzpicture}
		\caption{Case 3: In this case, $\theta_1 \neq \theta_2$ and $\delta_1 \neq \delta_2$. The geodesic distance of $p_1$ and $q_2$ from $T = (0,0,0)$ is the same, and the geodesic distance of $q_1$ and $p_2$ from $T$ is also the same. A plane passes through the points $p_1$, $p_2$, $q_1$, and $q_2$, forming a trapezoid. The Euclidean distance between $p_1$ and $p_2$ is the length of the diameter of this trapezoid.}
		\label{hemispherearb}
	\end{subfigure}
        \Description{The cases of the positions of two points relative to each other, figure a shows the case where both points have the same angular positions, figure b shows the case where both points have the same elevation. Figure c elaborates the general case by using the combination of the two previous cases, we create a regular trapezoid by using additional points.}
\end{figure}

\begin{lemma}
The mapping $\mathcal{M}$ ensures that for any two points $p_1$ and $p_2$ on the hemisphere, mapped to $p'_1 = \mathcal{M}(p_1)$ and $p'_2 = \mathcal{M}(p_2)$ on the disk, $\|p_1 - p_2\|_2 \leq \|p'_1 - p'_2\|_2$.
\label{lemmamapping}
\end{lemma}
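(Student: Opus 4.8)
The plan is to reduce the statement to two special cases and then assemble the general case from them by comparing isosceles trapezoids. Represent a hemisphere point as $(\sin\delta\cos\theta,\sin\delta\sin\theta,\cos\delta)$ and its image $\mathcal M(p)$ as the planar point with polar coordinates $(\delta,\theta)$. Note that, on both the sphere and the disk, the Euclidean distance between two points depends only on $\delta_1,\delta_2$ and $\cos(\theta_1-\theta_2)$, so we may assume $\Delta\theta:=|\theta_1-\theta_2|\in[0,\pi]$.

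\textbf{Base cases.} First I would settle the case $\theta_1=\theta_2$ (points on a common meridian): a direct chord computation gives $\|p_1-p_2\|_2 = 2\sin\!\big((\delta_2-\delta_1)/2\big)$ while $\|p_1'-p_2'\|_2 = |\delta_2-\delta_1|$, so the claim is exactly the elementary inequality $2\sin t\le 2t$ with $t=|\delta_2-\delta_1|/2$. Next the case $\delta_1=\delta_2=\delta$ (points on a common latitude circle): here $\|p_1-p_2\|_2 = 2\sin\delta\,\sin(\Delta\theta/2)$ and $\|p_1'-p_2'\|_2 = 2\delta\,\sin(\Delta\theta/2)$, so, since $\sin(\Delta\theta/2)\ge 0$, the claim is $\sin\delta\le\delta$.

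\textbf{General case.} Given $p_1=(\delta_1,\theta_1)$ and $p_2=(\delta_2,\theta_2)$ with $\delta_1\ne\delta_2$ and $\theta_1\ne\theta_2$, introduce the auxiliary hemisphere points $q_1=(\delta_1,\theta_2)$ and $q_2=(\delta_2,\theta_1)$. The first key claim is that $p_1,q_1,p_2,q_2$ (in this cyclic order) are coplanar and form an isosceles trapezoid: the sides $p_1q_1$ and $p_2q_2$ are parallel horizontal chords (at heights $\cos\delta_1,\cos\delta_2$, both perpendicular to the vertical plane through the $z$-axis bisecting the angle between $\theta_1$ and $\theta_2$), and the legs $q_1p_2$ and $q_2p_1$ are congruent by reflection in that plane; a short check shows the trapezoid is non-degenerate. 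The images $p_1',q_1',p_2',q_2'$ likewise form an isosceles trapezoid in the disk (an annular-sector quadrilateral symmetric about the bisecting ray). The second ingredient is the elementary identity that an isosceles trapezoid with parallel sides $a,b$ and legs of length $\ell$ has both diagonals equal to $\sqrt{\ell^2+ab}$. Applying the two base cases to the corresponding sides yields $\ell_{\mathrm{sph}}\le\ell_{\mathrm{disk}}$ (common-meridian case, for the legs $p_1q_2$ and $q_1p_2$) and $a_{\mathrm{sph}}\le a_{\mathrm{disk}}$, $b_{\mathrm{sph}}\le b_{\mathrm{disk}}$ (common-latitude case, for the parallel sides); since all quantities are nonnegative,
\[
\|p_1-p_2\|_2^2 = \ell_{\mathrm{sph}}^2+a_{\mathrm{sph}}b_{\mathrm{sph}} \;\le\; \ell_{\mathrm{disk}}^2+a_{\mathrm{disk}}b_{\mathrm{disk}} = \|p_1'-p_2'\|_2^2,
\]
because $p_1p_2$ is a diagonal of the trapezoid in both pictures. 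The degenerate situations ($\theta_1=\theta_2$, $\delta_1=\delta_2$, or one point equal to $T$) are exactly the base cases.

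\textbf{Main obstacle.} The only nontrivial part is the structural claim that $p_1,q_1,p_2,q_2$ on the sphere are coplanar and form an isosceles trapezoid whose diagonal is $p_1p_2$; once this is verified, everything else is the single inequality $\sin t\le t$ applied a few times together with the diagonal identity $\mathrm{diag}^2=\ell^2+ab$. A calculational alternative that sidesteps the trapezoid is to expand $\|p_1'-p_2'\|_2^2-\|p_1-p_2\|_2^2$ as an affine function of $\cos\Delta\theta$ whose slope $-2(\delta_1\delta_2-\sin\delta_1\sin\delta_2)$ is nonpositive (since $\sin\delta_i\le\delta_i$), so its minimum over $\cos\Delta\theta\in[-1,1]$ is attained at $\cos\Delta\theta=1$, where it equals $(\delta_1-\delta_2)^2-4\sin^2\!\big((\delta_1-\delta_2)/2\big)\ge 0$.
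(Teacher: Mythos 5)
Your proposal is correct and follows essentially the same route as the paper: the same two base cases ($\theta_1=\theta_2$ and $\delta_1=\delta_2$), the same auxiliary points $q_1=(\delta_1,\theta_2)$, $q_2=(\delta_2,\theta_1)$ forming an isosceles trapezoid with diagonal $p_1p_2$, and the same final reduction to $\sin t\le t$ via the identity $\mathrm{diag}^2=\ell^2+ab$ (the paper writes this out as the two inequalities $\sin^2(\tfrac{|\delta_1-\delta_2|}{2})\le(\tfrac{\delta_1-\delta_2}{2})^2$ and $\sin\delta_1\sin\delta_2\sin^2(\tfrac{|\theta_1-\theta_2|}{2})\le\delta_1\delta_2\sin^2(\tfrac{|\theta_1-\theta_2|}{2})$). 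Your closing remark giving a trapezoid-free argument, that the difference of squared distances is affine in $\cos\Delta\theta$ with nonpositive slope, is a clean alternative not in the paper, but the main line is the same.
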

\begin{proof}
 
We have three cases for $p_1=(\theta_1, \delta_1)$ and $p_2=(\theta_2, \delta_2)$.

    \paragraph{ \textbf{Case 1:} $\theta_1=\theta_2=\theta$, (see Figure~\ref{fig:hemisphere_same_theta})}
     
     In this case, the geodesic distance between $p_1$ and $p_2$ is given by $|\delta_1 - \delta_2|$, and $\|p'_1 - p'_2\|_2$ is also equal to $|\delta_1 - \delta_2|$. Since $p_1$ and $p_2$ lie on the same meridian, $\|p_1 - p_2\|_2$ is equal to $2\sin\left(\frac{|\delta_1 - \delta_2|}{2}\right)$. Using the inequality $x \leq \sin(x)$ for $x \in \left[0, \frac{\pi}{2}\right)$, we get $\|p_1 - p_2\|_2 \leq \|p'_1 - p'_2\|_2$.
     
 \paragraph{ \textbf{Case 2:} $\delta_1=\delta_2=\delta$, (see Figure~\ref{fig:hemispheresamedelta})}

Consider the circle passing through $p_1$ and $p_2$ that is parallel to the $xy$-plane, with its center on the $z$-axis. The angle between $p_1$, the center of the circle, and $p_2$ is $|\theta_1 - \theta_2|$, and the radius of the circle is $\sin(\delta)$. Consequently, the geodesic distance between $p_1$ and $p_2$ is $|\theta_1 - \theta_2| \sin(\delta)$, while $\|p_1 - p_2\|_2$ is $2\sin\left(\frac{|\theta_1 - \theta_2|}{2}\right) \sin(\delta)$, and $\|p'_1 - p'_2\|_2$ is $2\sin\left(\frac{|\theta_1 - \theta_2|}{2}\right) \delta$. Therefore, we have $\|p_1 - p_2\|_2 \leq \|p'_1 - p'_2\|_2$.

        %todo: find the exact factor?
        
        %Also, the geodesic distance between $p_1$ and $p_2$ is less than the arc that passes from $p'_1$ to $p'_2$ in a cycle with radius $d_g$. This implies that the geodesic distance between $p_1$ and $p_2$ is always less than the Euclidean distance between $p_1'$ and $p_2'$ multiplied by $\pi/2$.

 \paragraph{ \textbf{Case 3:} $\theta_1\neq \theta_2 $ and $\delta_1\neq \delta_2$, (see Figure~\ref{hemispherearb})}

Let  $q_1=(\delta_1,\theta_2)$ and $q_2=(\delta_2,\theta_1)$. Along with $p_1$ and $p_2$, these four points form an isosceles trapezoid.
And the lenght of the diagonal of this isosceles trapezoid is equal to $\|p_1-p_2\|_2$.
The first case implies, 
        $\|p_1-q_2\|_2=\|p_2-q_1\|_2=2 \sin(\frac{|\delta_2-\delta_1|}{2})$ and the second case implies
         $\|p_1-q_1\|_2=2 \sin(\delta_1)\sin(\frac{|\theta_1-\theta_2|}{2})$
         and 
         $\|p_2-q_2\|_2=2\sin(\delta_2)\sin(\frac{|\theta_1-\theta_2|}{2})$. Thus, 
          $$\|p_1-p_2\|_2=\sqrt{4\sin^2(\frac{|\delta_1-\delta_2|}{2}) + 4\sin(\delta_1)\sin(\delta_2)\sin^2(\frac{|\theta_1-\theta_2|}{2})}.$$
         On the other hand,
         $$\|p'_1-p'_2\|_2=\sqrt{{\delta_1}^2+{\delta_2}^2 - 2\delta_1\delta_2\cos(|\theta_1 - \theta_2|)}.$$
         
         Now, we show that $\|p_1-p_2\|_2\leq\|p'_1-p'_2\|_2$.
% So, we need to prove
% \begin{eqnarray*}
% 2\sqrt{\sin^2(\frac{|d_1-d_2|}{2}) + \sin(d_1)\sin(d_2)\sin^2(\frac{|\theta_1-\theta_2|}{2})} \leq \\
% \sqrt{d_1^2+d_2^2 - 2d_1d_2\cos(|\theta_1 - \theta_2|)}.
% \end{eqnarray*}
Consequently we need to show,
% We can start by squaring each side since both are non-negative: $4\sin^2(\frac{|d_1-d_2|}{2}) + \\4\sin(d_1)\sin(d_2)\sin^2(\frac{|\theta_1-\theta_2|}{2}) \leq d_1^2+d_2^2 - 2d_1d_2\cos(|\theta_1 - \theta_2|)$. This implies: $\sin^2(\frac{|d_1-d_2|}{2}) + \sin(d_1)\sin(d_2)\sin^2(\frac{|\theta_1-\theta_2|}{2}) \leq (\frac{d_1}{2})^2+(\frac{d_2}{2})^2 - d_1d_2\frac{\cos(|\theta_1 - \theta_2|)}{2}$.

\begin{eqnarray*}
\sin^2(\frac{|\delta_1-\delta_2|}{2}) + \sin(\delta_1)\sin(\delta_2)\sin^2(\frac{|\theta_1-\theta_2|}{2}) \leq \\(\frac{\delta_1}{2} - \frac{\delta_2}{2})^2+\frac{\delta_1\delta_2}{2} - \delta_1\delta_2\frac{\cos(|\theta_1 - \theta_2|)}{2}
\end{eqnarray*}
 $\sin(x) \leq x$ implies that $ \sin^2(\frac{|\delta_1 - \delta_2|}{2}) \leq (\frac{\delta_1}{2} - \frac{\delta_2}{2})^2$.
It remains to prove that:
\begin{eqnarray*}
 \sin(\delta_1)\sin(\delta_2)\sin^2(\frac{|\theta_1 - \theta_2|}{2}) \leq (\delta_1\delta_2)(\frac{1}{2} - \frac{\cos(|\theta_1 - \theta_2|)}{2}). 
     \end{eqnarray*}

Since $\sin(\delta_1) \sin(\delta_2) \leq \delta_1 \delta_2$ and using the identity $2\sin^2(\theta) = 1 - \cos(2\theta)$, the inequality holds, completing the proof.

  \label{mapping}
\end{proof}

\subsection{Our strategy, when asleep robots are on the boundary on the unit \texorpdfstring{$l_2$}{l2}-ball}

 \label{asleepboundryl2strat}
For an instance of FTP in $(\mathbb{R}^3, l_2)$, where the asleep robots are on the boundary and the initially active robot is at the center of the ball, we propose the following strategy.
We split the sphere into two hemispheres. Without loss of generality, assume $T = p_1 = (0, 0, 1)$, and that the hemisphere containing $p_1$ has $n_1 \geq \frac{n}{2}$ robots. We move $p_0$ toward $p_1$ in $1$ time unit. Then, using the mapping $\mathcal{M}$, we map the robots on the boundary of the hemisphere containing $p_1$ onto an $l_2$-disk with radius $\frac{\pi}{2}$.
We have $n_1$ asleep robots and one awake robot at the origin of the disk. Using the Arc-Strategy from Lemma~\ref{arcratio}, we can wake all the robots in the disk in $\frac{\pi}{2}\times 3.9651 + 2r_1 + 2r_2$ time units, where $r_1$ is the distance of the nearest point to the origin 
% new lines by kb started :
and $r_2$ depends on the path and we have $r_2 \leq 1$
% new lines by kb ended.

. Since both $p_0$ and $p_1$ are at the origin, $r_1 = 0$, and all robots on the disk can be woken in $\frac{\pi}{2}\times 5.9651$ time units.

Thus, by Lemma~\ref{lemmamapping} and the Arc-Strategy, after $1$ time unit (for moving $p_0$ toward $p_1$) plus $\frac{\pi}{2}\times 5.9651$ time units, all robots in the upper hemisphere are awake, giving us $n_1 + 1$ awake robots. Since $n_1 \geq \frac{n}{2}$, each awake robot on the boundary can wake a robot in the lower hemisphere within $2$ time units. Therefore, all robots can be woken in $3 + \frac{\pi}{2}\times 5.9651 \simeq 12.37$ time units.

\subsection{Our strategy for surface-FTP}
\label{surfaceftpstrat}
We apply the mapping approach to solve the surface-FTP. Points on the hemisphere's boundary are mapped onto a disk with radius $\frac{\pi}{2}$ using the mapping $\mathcal{M}$. In Lemma~\ref{lemmamapping}, we proved that the Euclidean distance between two points $p_1$ and $p_2$ on the hemisphere is:

\begin{eqnarray*}
\|p_1-p_2\|_2=\sqrt{4\sin^2(\frac{|\delta_1-\delta_2|}{2}) + 4\sin(\delta_1)\sin(\delta_2)\sin^2(\frac{|\theta_1-\theta_2|}{2})}.
\end{eqnarray*}
So, the geodesic distance between them is:
\begin{eqnarray*}
2\arcsin(\sqrt{\sin^2(\frac{|\delta_1-\delta_2|}{2}) + \sin(\delta_1)\sin(\delta_2)\sin^2(\frac{|\theta_1-\theta_2|}{2})})
\end{eqnarray*}
On the other hand, the Euclidean distance between their mapping $\mathcal{M}(p_1)=p_1'$ and  $\mathcal{M}(p_2)=p_2'$ is:
\begin{eqnarray*}
\|p'_1-p'_2\|_2=\sqrt{{\delta_1}^2+{\delta_2}^2 - 2\delta_1\delta_2\cos(|\theta_1 - \theta_2|)}.
\end{eqnarray*}

We claim that the geodesic distance between points $p_1$ and $p_2$ is less than or equal to the Euclidean distance between the corresponding points $p'_1$ and $p'_2$. To support this, we conducted a computational analysis, calculating distances for all combinations of $\delta_1 = \frac{\pi}{2} \epsilon \times i$, $\delta_2 = \frac{\pi}{2} \epsilon \times j$, and $\theta_1 - \theta_2 = \pi \epsilon \times k$, with $i, j, k$ ranging from $0$ to $\frac{1}{\epsilon}$. The code, available online \footnote{\href{https://github.com/sahroush/Geometric-Freeze-Tag-Problem/blob/main/Calculations/GeodesicToEucledeanRatioCalculator.ipynb}{Geodesic to Euclidean Distance Ratio Calculator Code}}, was run with $\epsilon = 0.001$, and in all cases, the geodesic distance between $p_1$ and $p_2$ was less than or equal to $\|p'_1-p'_2\|_2$ \footnote{\href{https://github.com/sahroush/Geometric-Freeze-Tag-Problem/blob/main/Calculations/MakespanCalculator.cpp}{A detailed computation of the derivatives of the distance functions can be found in our code.}}
Thus, the wake-up time for robots on the hemisphere is less than or equal to the wake-up time for robots on the disk. This means that a wake-up tree for robots on the disk also works for robots on the hemisphere, with a makespan that is no greater.

Now we focus on surface-FTP. Initially, we have an awake robot, $p_0$, on the surface of a sphere, and this robot must wake $n$ asleep robots. The distances between them are geodesic distances on the surface. We divide the sphere into two halves, with $p_0$ located at the top of the upper half.

\begin{itemize}
    \item We move $p_0$ to the nearest point, $p_1$, in $\rho_1$ time units, where $\rho_1$ is the geodesic distance between $p_0$ and $p_1$. Now, with two awake robots, $p_1$ handles robots in the upper hemisphere, while $p_0$ handles the robots in the lower hemisphere.
    
    \item The strategy for waking up the robots in the upper hemisphere: Using the mapping $\mathcal{M}$, we map the points on the upper hemisphere onto a disk with radius $\frac{\pi}{2}$, placing the awake robot at $p_1 = (\rho_1, \theta_1)$. We then move $p_1$ to the center of the disk. The remaining robots on the disk can be awakened in $\frac{\pi}{2} \times 5.4162$ time units using Theorem \ref{mix}. Therefore, all robots on the upper hemisphere are awakened in at most $2\rho_1 + \frac{\pi}{2} \times 5.4162$ time units. Since $\rho_1 \leq \frac{\pi}{2}$, the makespan for the upper hemisphere is at most $11.65$ time units.
    
    \item The strategy for waking up the robots in the lower hemisphere: First, $p_0$ moves to $p_1$ in $\rho_1$ time units, then to $p'_1$ in $\pi - \rho_1 - \rho'_1$ time units, where $p'_1$ is the nearest point to the lowest point in the hemisphere. Using the mapping $\mathcal{M}$, we map the lower hemisphere onto a disk with radius $\frac{\pi}{2}$. Now, with two awake robots, $p_0$ and $p'_1$, within $\rho'_1$ of the disk’s center, the remaining robots can be awakened in $\frac{\pi}{2} \times 5.4162$ time units using Theorem \ref{mix}. The total makespan for the lower hemisphere is $\pi - \rho'_1 + \frac{\pi}{2} \times 5.4162\leq 11.65$
\end{itemize}
Thus, the wake-up ratio for surface-FTP is at most $11.65$.

\pagebreak

%%%%%%%%%%%%%%%%%%%%%%%%%%%%%%%%%%%%%%%%%%%%%%%%%%%%%%%%%%%%%%%%%%%%%%%%

%%% The acknowledgments section is defined using the "acks" environment
%%% (rather than an unnumbered section). The use of this environment 
%%% ensures the proper identification of the section in the article 
%%% metadata as well as the consistent spelling of the heading.

% \begin{acks}
% If you wish to include any acknowledgments in your paper (e.g., to 
% people or funding agencies), please do so using the `\texttt{acks}' 
% environment. Note that the text of your acknowledgments will be omitted
% if you compile your document with the `\texttt{anonymous}' option.
% \end{acks}

%%%%%%%%%%%%%%%%%%%%%%%%%%%%%%%%%%%%%%%%%%%%%%%%%%%%%%%%%%%%%%%%%%%%%%%%

%%% The next two lines define, first, the bibliography style to be 
%%% applied, and, second, the bibliography file to be used.

\bibliographystyle{ACM-Reference-Format} 
\bibliography{sample}

%%%%%%%%%%%%%%%%%%%%%%%%%%%%%%%%%%%%%%%%%%%%%%%%%%%%%%%%%%%%%%%%%%%%%%%%

\end{document}